\documentclass[12pt]{article}
\usepackage{amsmath, amsfonts, amssymb, amsthm, fullpage, color, bbm, enumerate, titlesec, graphicx, epstopdf, multirow, array, mathrsfs}
\usepackage[T1]{fontenc}
\usepackage{lmodern}

\usepackage[onehalfspacing]{setspace}

\definecolor{darkred}{rgb}{0.5,0,0}

\titleformat*{\paragraph}{\sc}

\usepackage{hyperref}
\hypersetup{allbordercolors={1 1 1},allcolors=darkred,colorlinks=true}
\usepackage[capitalize,nameinlink,noabbrev]{cleveref}

\usepackage[small,bf]{caption}

\usepackage[round,longnamesfirst]{natbib}
\bibliographystyle{newapa}

\usepackage[runin]{abstract}

\setlength{\leftmargini}{16pt}
\setlength{\skip\footins}{1\baselineskip}
\setlength{\footnotesep}{0.6\baselineskip}

\newcolumntype{C}[1]{>{\centering\arraybackslash}p{#1}}

\theoremstyle{plain}
\newtheorem{asn}{Assumption}
\crefname{asn}{Assumption}{Assumptions}
\newtheorem{lem}{Lemma}
\crefname{lem}{Lemma}{Lemmas}

\newtheorem{prop}{Proposition}
\crefname{prop}{Proposition}{Propositions}

\DeclareMathOperator*{\argmin}{argmin}
\DeclareMathOperator*{\tr}{trace}

\DeclareMathOperator*{\diag}{diag}
\DeclareMathOperator*{\var}{Var}
\DeclareMathOperator*{\cov}{Cov}
\DeclareMathOperator*{\corr}{Corr}

\DeclareMathOperator*{\sgn}{sign}

\allowdisplaybreaks

\newcommand{\alert}[1]{#1} 

\begin{document}

\title{\texorpdfstring{\vspace{-\baselineskip}}{}Standard Errors for Calibrated Parameters\thanks{Email: {\tt matthew.cocci@gmail.com} and {\tt mikkelpm@princeton.edu}. We are grateful for comments from the editor Francesca Molinari, three anonymous referees, Fernando Alvarez, Isaiah Andrews, Kirill Evdokimov, Bo Honor\'{e}, Michal Koles\'{a}r, Elena Manresa, Benjamin Moll, Pepe Montiel Olea, Ulrich M\"{u}ller, and seminar participants at several venues. We thank Minsu Chang and Silvia Miranda-Agrippino for supplying the moments used in our second empirical application. Samya Aboutajdine and Eric Qian provided excellent research assistance. Plagborg-M{\o}ller acknowledges that this material is based upon work supported by the NSF under Grant {\#}1851665. Any opinions, findings, and conclusions or recommendations expressed in this material are those of the authors and do not necessarily reflect the views of the NSF.}}
\author{Matthew D. Cocci \\ Amazon \and Mikkel Plagborg-M{\o}ller \\ Princeton University}
\date{\texorpdfstring{\vspace{0.2\baselineskip}}{}This version: June 17, 2024 \texorpdfstring{\\[0.2ex]}{} First version: June 12, 2019}
\maketitle

\vspace{-\baselineskip}
\begin{abstract}
Calibration, the practice of choosing the parameters of a structural model to match certain empirical moments, can be viewed as minimum distance estimation. Existing standard error formulas for such estimators require a consistent estimate of the correlation structure of the empirical moments, which is often unavailable in practice. Instead, the variances of the individual empirical moments are usually readily estimable. Using only these variances, we derive conservative standard errors and confidence intervals for the structural parameters that are valid even under the worst-case correlation structure. In the over-identified case, we show that the moment weighting scheme that minimizes the worst-case estimator variance amounts to a moment selection problem with a simple solution. Finally, we develop tests of over-identifying or parameter restrictions. We apply our methods empirically to a model of menu cost pricing for multi-product firms and to a heterogeneous agent New Keynesian model.
\end{abstract}
\emph{Keywords:} calibration, data combination, minimum distance, moment selection, semidefinite programming. \emph{JEL codes:} C12, C52.

\section{Introduction}
\label{sec:intro}

\alert{The use of structural economic models involves first selecting values for model parameters and, second, evaluating the model's implications for empirical moments or counterfactual quantities under those parameters. This set of tasks is often referred to as ``calibration.'' \cite{Kydland1996} define this term in a macroeconomic context, advocating that one choose the parameters in a non-econometric fashion and then compare simulated moments to empirical counterparts.\footnote{\cite{DeJong2011} and \citet{Nakamura2018} provide further discussion of calibration.} Alternatively, \citet{Hansen1996} and Thomas Sargent (quoted in \citealp[p.\ 568]{Evans2005}) treat ``calibration'' as synonymous with moment matching---or more generally, minimum distance---estimation. We adopt the latter perspective on calibration in this paper, as it provides a unified econometric framework for estimating parameters if desired (provided suitable identification conditions are satisfied), as well as for testing the ability of a structural model to fit empirical moments for any given values of the parameters (whether estimated or fixed as in \citealp{Kydland1996}). Moment matching estimation and specification testing is a widely used paradigm, not only in macroeconomics, but in diverse fields of applied structural economics.}

\alert{The application of moment matching estimators and tests to structural models} requires knowledge of the entire variance-covariance matrix of the empirical moments, but in practice this matrix is often only partially known. When the empirical moments are obtained from different data sets, different econometric methods, or different previous papers, it is usually hard or impossible to estimate the off-diagonal elements of the variance-covariance matrix. Nevertheless, the diagonal of the matrix---the variances of the individual empirical moments---is typically estimable. In this paper, we show that the diagonal suffices to obtain practically useful worst-case standard errors for the moment matching estimator. Moreover, in the over-identified case, we show that the moment weighting that minimizes the worst-case estimator variance amounts to a moment selection problem with a simple solution. \alert{We also propose a test of correct model specification that can be applied both when the parameters are estimated and when they are fixed as in \citet{Kydland1996}.} Hence, our limited-information methods allow researchers to choose their moments and data sources freely without giving up on valid statistical inference.

We show that worst-case standard errors for the structural parameters (or smooth transformations thereof), using only the empirical moment variances, are easy to compute. They are given by a weighted sum of the standard errors of individual empirical moments, where the weights depend on the moment weight matrix and the derivatives of the moments with respect to the structural parameters. The derivatives can be obtained analytically, by automatic differentiation, or by first differences. Using these worst-case standard errors, one can construct a confidence interval that is valid even under the worst-case correlation structure. The confidence interval is generally conservative for specific correlation structures, but its minimax coverage is exact, i.e., under the worst-case correlation structure, which amounts to perfect positive/negative correlation. The confidence interval is likely to be informative in many empirical applications, as it is at most $\sqrt{p}$ times wider than it would be if the moments were known to be independent, where $p$ is the number of moments used for estimation.

Given knowledge of only the individual empirical moment variances, we show that the moment weighting scheme that minimizes the worst-case estimator variance amounts to a moment \emph{selection} problem. That is, the efficient minimum distance weight matrix attaches zero weight to some of the moments. The efficient selection of moments can be conveniently computed by running a median regression (i.e., Least Absolute Deviation regression) on a particular artificial data set. Our limited-information efficient estimator given knowledge of only the moment variances is generally different from the familiar full-information efficient estimator that requires knowledge of the entire moment correlation structure.

To understand the intuition behind our efficiency result, consider the analogy of portfolio selection in finance. This analogy is mathematically relevant, as it is well known that any minimum distance estimator is asymptotically equivalent to a linear combination of the empirical moments---a ``portfolio'' of moments---with a linear restriction on the weights to ensure unbiasedness. When constructing a minimum-variance financial portfolio that achieves a given expected return, it is usually optimal to diversify across all available assets, \emph{except} if the assets are perfectly (positively or negatively) correlated. In the latter extreme case, it is optimal to entirely disregard assets with sufficiently high variance relative to their expected return. But it is precisely the extreme case of perfect correlation that delivers the worst-case variance of a given portfolio. Thus, the portfolio with the smallest worst-case variance across correlation structures is a portfolio that \emph{selects} a subset of the available assets.

We propose joint tests of parameter restrictions as well as tests of over-identifying restrictions. A common form of over-identification test used in the empirical literature is to check whether the estimated structural parameters yield a good fit of the model to ``non-targeted'' moments, i.e., moments that were not exploited for parameter estimation. We show how to implement a formal statistical test based on this idea in our setup. \alert{The test can be performed either using estimated parameters (if these are identified from the targeted moments) or given pre-specified parameter values (perhaps obtained from previous studies) as in the approach advocated by \citet{Kydland1996}. Separately, we develop a Wald-type joint test of parameter restrictions.} The proof of the validity of this test relies on tail probability bounds for quadratic forms in Gaussian vectors from \citet{Szekely2003}, but the test statistic and critical value are simple and easy to compute.

Finally, we extend our limited-information procedures to settings with more detailed knowledge of the covariance matrix of empirical moments. This includes settings where the entire correlation structure is known for some subsets of the moments, or where certain moments are known to be independent of each other.

We illustrate the usefulness of our procedures through two empirical applications. In the first one, we estimate and test the \citet{Alvarez2014} model of menu cost price setting in multi-product firms, by matching moments of price changes. In the second application, we estimate and test a heterogeneous agent New Keynesian model developed by \citet{McKay2016} and \citet{Auclert2020}, by matching impulse responses for macro time series and cross-sectional micro moments. Our worst-case standard errors yield informative inference on several parameters of interest in both applications. A Monte Carlo simulation study calibrated to the first application indicates that our methods perform well in finite samples.

\paragraph{Literature.}
Unlike the literature on correlation matrix completion \citep[e.g.,][]{Georgescu2018}, we solve the explicit problem of finding the worst-case correlation structure when estimating parameters in a structural model. Our derivations of the worst-case efficient weight matrix and joint testing procedure do not seem to have parallels in the matrix completion literature.

While we focus on cases where it is difficult to estimate the correlation structure of different moments, in some applications it may be possible to model and exploit the precise relationship between the moments. The literature on estimating heterogeneous agent models in macroeconomics has recently developed procedures for combining macro and micro data, as discussed further in \cref{sec:appl_hank}. \citet{Hahn2022prod,Hahn2022mixingale} provide advanced tools for doing inference with a mix of cross-sectional and time series data. These methods, unlike ours, generally require access to the underlying data. \citet{Imbens1994} consider a microeconometric setting where certain macro moments are known without error, which is a special case of our framework. \citet{Hahn2020Estimation} give examples of structural models where both time series and cross-sectional data are required for identification of structural parameters. Their insights may help inform the choice of moments for the methods that we develop below.

\paragraph{Outline.}
\cref{sec:model} defines the moment matching setup. \cref{sec:procedure} derives the worst-case standard errors and the efficient moment weighting/selection. \cref{sec:test} develops tests of parameter restrictions and of over-identifying restrictions. \cref{sec:appl} contains two empirical illustrations. \cref{sec:concl} concludes. \cref{sec:appendix} contains proofs and other technical details. Code for implementing our procedures is available online.

\section{Setup} \label{sec:model}

Consider a standard moment matching (minimum distance) estimation framework. Let $\mu_0 \in \mathbb{R}^p$ be a vector of reduced-form parameters, which we will refer to as ``moments'', though the method applies more generally. Let $\theta_0 \in \Theta \subset \mathbb{R}^k$ be a vector of structural model parameters. According to an economic model, the two parameter vectors are linked by the relationship $\mu_0 = h(\theta_0)$, where $h \colon \Theta \to \mathbb{R}^p$ is a known function implied by the model. The map $h(\cdot)$ may be computed either analytically or numerically. We have access to an estimator $\hat{\mu}$ (``empirical moments'') that satisfies
\begin{equation} \label{eqn:mu_asyn}
\sqrt{n}(\hat{\mu}-\mu_0) \stackrel{d}{\to} N(0,V)
\end{equation}
for a $p\times p$ symmetric positive semidefinite variance-covariance matrix $V$.\footnote{Here and below, all limits are taken as the sample size $n\to\infty$. We implicitly think of the sample sizes for the different moments as being proportional, with the factors of proportionality reflected in $V$. If some element $\hat{\mu}_j$ converges at a faster rate than $\sqrt{n}$, then $V_{jj}=0$. Sample sizes and convergence rates only enter into our practical procedures through their implicit effect on the calculation of the moment standard errors $\hat{\sigma}_j$ (discussed below), which is handled automatically by econometric software.} Let $\hat{W}$ be a $p \times p$ symmetric matrix satisfying $\hat{W} \stackrel{p}{\to} W$ (we discuss the choice of $\hat{W}$ in \cref{sec:weight}). Then a ``moment matching'' or  ``minimum distance'' estimator of $\theta_0$ is given by
\begin{equation} \label{eqn:estimator}
\hat{\theta} = \argmin_{\theta \in \Theta}\; (\hat{\mu}-h(\theta))'\hat{W}(\hat{\mu}-h(\theta)).
\end{equation}
This estimation strategy is sometimes referred to as ``calibration''.\footnote{Our analysis extends in a straight-forward manner to Generalized Minimum Distance estimation. In that setting $\theta_0$ and $\mu_0$ are linked through a possibly non-separable equation $g(\theta_0,\mu_0)=0_{m \times 1}$. The setting in our paper is a special case with $g(\theta,\mu) = h(\theta)-\mu$, but our calculations carry over with few changes because the asymptotic expansions are essentially the same \citep{Newey1994}.}

If we were able to estimate the covariance matrix of the empirical moments $\hat{\mu}$ consistently, it would be straight-forward to construct standard errors for any smooth function of the estimator $\hat{\theta}$. Suppose we are interested in the scalar transformed parameter $r(\theta_0)$, where $r \colon \Theta \to \mathbb{R}$. For example, $r(\cdot)$ may equal a particular counterfactual quantity in the structural model, or we could simply set $r(\theta)=\theta_i$ for some index $i$. Under the standard regularity conditions listed below in \cref{asn:regularity},
\begin{align}
\sqrt{n}(r(\hat{\theta})-r(\theta_0)) &= \lambda'(G'WG)^{-1}G'W \sqrt{n}(\hat{\mu}-\mu_0) + o_p(1) \label{eqn:delta_method} \\
&\stackrel{d}{\to} N\left(0, \lambda'(G'WG)^{-1}G'WVWG(G'WG)^{-1}\lambda\right), \nonumber
\end{align}
where $G \equiv \partial h(\theta_0)/\partial \theta' \in \mathbb{R}^{p \times k}$ and $\lambda \equiv \partial r(\theta_0)/\partial \theta \in \mathbb{R}^k$. See \citet{Newey1994} for details. If the entire asymptotic covariance matrix $V$ of $\hat{\mu}$ were consistently estimable, the above display would allow computation of standard errors, confidence intervals, and hypothesis tests.

Unfortunately, the full correlation structure of $\hat{\mu}$ is difficult or impossible to estimate in certain applications. This may be the case, for example, when the moments $\hat{\mu}$ are obtained from a variety of different data sources or econometric methods, or from previous studies for which the underlying data is not readily available. Moreover, if the moments involve a mix of time series and cross-sectional data sources, it can be difficult conceptually or practically to estimate correlations across data sources, whether through the bootstrap or the Generalized Method of Moments (GMM). While the structural model could in some cases be exploited to estimate the moment covariance matrix, this may require stronger assumptions than what is needed for point estimation of the structural parameters.\footnote{For example, if we exploit the model's predictions about second moments for estimation, model-based estimation of $V$ would require believing the model's predictions about fourth moments.} We discuss these points in more detail at the end of this subsection and in the empirical applications in \cref{sec:appl}.

Yet, it is often the case that the standard errors of each of the components of $\hat{\mu}$ are available. These marginal standard errors may be directly computable from data, or they may be reported in the various papers that the individual elements of $\hat{\mu}$ are gleaned from. Thus, assume that we have access to standard errors $\hat{\sigma}_1,\dots,\hat{\sigma}_p \geq 0$ satisfying
\begin{equation} \label{eqn:se_consist}
\sqrt{n}\hat{\sigma}_j \stackrel{p}{\to} V_{jj}^{1/2},\quad j=1,\dots,p.
\end{equation}
We show in the next section that these marginal standard errors suffice for doing informative inference on $r(\theta_0)$.\footnote{Our set-up covers applications where the parameters $\hat{\theta}$ are not explicitly estimated but are instead fixed at certain \alert{extraneous} values $\hat{\mu}$ obtained from prior studies. In this case $h(\theta)=\theta$, and $\hat{\sigma}_j$ is the standard error (which could potentially be 0 for some $j$) of the $j$-th \alert{extraneous} parameter value $\hat{\theta}_j=\hat{\mu}_j$.}

For ease of reference, we summarize our technical assumptions here:
\begin{asn} \label{asn:regularity}
\leavevmode
\begin{enumerate}[i)]
\item The empirical moment vector $\hat{\mu}$ is asymptotically normal, as in \eqref{eqn:mu_asyn}. \label{itm:asn_regularity_1}
\item The standard errors $\hat{\sigma}_j$ are consistent, as in \eqref{eqn:se_consist}. \label{itm:asn_regularity_2}
\item $h(\cdot)$ and $r(\cdot)$ are both continuously differentiable in a neighborhood of $\theta_0$ (which lies in the interior of $\Theta$), $G \equiv \partial h(\theta_0)/\partial \theta'$ has full column rank, and $\lambda \equiv \partial r(\theta_0)/\partial \theta \neq 0_{k \times 1}$. \label{itm:asn_regularity_3}
\item $\hat{\theta} \stackrel{p}{\to} \theta_0$. \label{itm:asn_regularity_4}
\item $\hat{W} \stackrel{p}{\to} W$ for a symmetric positive semidefinite matrix $W$.
\item $G'WG$ is nonsingular. \label{itm:asn_regularity_6}
\end{enumerate}
\end{asn}
\noindent Conditions (\ref{itm:asn_regularity_2})--(\ref{itm:asn_regularity_6}) are standard regularity conditions that are satisfied in smooth, identified models \citep{Newey1994}. Note that we allow for the possibility that some moments are known with certainty ($V_{jj}=0$) as in \citet{Imbens1994}. We will now discuss the key condition (\ref{itm:asn_regularity_1}).

\paragraph{Discussion of limited information about the correlation structure and the joint normality assumption.} We now provide several examples of situations where it is difficult or impossible to estimate the correlation structure of different types of empirical moments. In cases when the elements of the moment vector $\hat{\mu}$ are obtained from different data sets, the joint normality assumption \eqref{eqn:mu_asyn} requires justification. This ensures not only that a normal distribution is the appropriate reference distribution for obtaining critical values, but also that the vector $\hat{\mu}$ can reasonably be viewed as arising from some joint, repeatable experiment for which the given standard errors $\hat{\sigma}_j$ capture all sources of uncertainty. The joint normality assumption is most easily understood and justified under a model-based (e.g., shock-based) perspective on uncertainty. In this view, there exists a coherent data generating process with both aggregate and idiosyncratic shocks that affect all of the observed data. The empirical application of \cref{sec:appl_hank} is a prototypical example of this framework, but such applications are not the only use case.

There are several cases in which the joint normality assumption appears reasonable, but estimation of the full moment covariance matrix $V$ could be challenging. For example:
\begin{enumerate}
\item \emph{The moments are obtained from the same or similar data sets, but the underlying data for some of the moments is not available.} For example, some moments may be reported in previous papers that use proprietary data. In this case, traditional full-information inference procedures are inapplicable because it is generally impossible to estimate moment cross-correlations without access to the underlying data. However, as long as we have access to marginal standard errors for each individual moment, our methods can be applied. See \cref{sec:appl_price} for an empirical example.
\item \emph{Some of the moments are computed from aggregate time series and others from panel data spanning similar time periods.} If the clustering procedure of the panel data regressions allows for aggregate shocks (such as when clustering by time period or when using the standard error formula from \citealp{Driscoll1998}), and these aggregate shocks also affect the time series data, then the panel regressions will have correct standard errors but the coefficients may be correlated with the time series moments.
\item \emph{The moments stem from time series data observed at various frequencies, or from regional data with various levels of geographic aggregation.} While careful econometric analysis may allow the estimation of the full covariance matrix of the moments by appropriately collapsing the sample moment conditions for the higher-frequency or higher-resolution data, this could be cumbersome in practice. This is especially true if there is limited overlap in the time spans of the higher-frequency and lower-frequency data, or in the geographical coverage of the higher-resolution and lower-resolution data.
\item \emph{We use a combination of aggregate time series moments and micro moments from surveys, and the latter measure time-invariant parameters that are not affected by macro shocks in the sample.} In this case, it is often reasonable to assume that the uncertainty in the micro moments (arising purely from idiosyncratic noise) is uncorrelated with the uncertainty in the macro moments. Similarly, cross-sectional moments obtained from different independent random samples are plausibly uncorrelated with each other. Such extra information about off-diagonal entries of $V$ can be incorporated in our procedures, as shown in \cref{sec:ext}.
\item \emph{The moments are all computed from the same data set, but using a variety of complicated procedures.} While it may in principle be possible to estimate the correlation structure analytically by stacking the first-order conditions for the different estimation procedures into a large GMM moment vector, applied researchers may wish to avoid these complications. The bootstrap offers a simpler approach but is computationally impractical in some cases. In contrast, our analytical inference procedures below only require researchers to obtain standard errors for each estimated moment separately, and these are often outputted automatically by existing software. One may worry about the internal consistency of mixing moments arising from different estimation procedures, in the sense that the models used to compute the various moments could be mutually incompatible (or incompatible with the structural model that the researcher seeks to fit to the moments). This can be tested using the over-identification test that we propose in \cref{sec:test_overid}.
\item \emph{The moments are all computed from the same data set, but
  full-information inference procedures perform poorly in finite
samples.} \alert{In this case full-information efficient inference is feasible and asymptotically valid. However, as shown by \citet{Altonji1996}, traditional full-information
efficient minimum distance estimation, which relies on an estimate
$\hat{V}$ of the entire moment variance-covariance matrix $V$, can be
subject to severe finite-sample biases in many applications, especially
when $p$ is large.}\footnote{When matching simple moments with access to
the underlying i.i.d.\ data, generalized empirical likelihood estimators
are known to have better finite-sample properties than efficient minimum
distance or GMM, though they are often harder to compute \citep[see][for
a review]{Imbens2002}.}
\alert{Though we do not have formal results characterizing finite-sample
performance, one could apply our limited-information inference
procedure as a practical \emph{conservative} robustness check that avoids the noise arising from
estimating the non-diagonal entries of $V$.}
\end{enumerate}
To be clear, in cases 2--6 above (though not in case 1), full-information inference may be possible in principle, as we have indicated. However, this could be cumbersome to implement in practice and/or suffer from poor finite-sample properties. The limited-information procedures we develop offer applied researchers simpler tools that in many cases provide informative inference about structural parameters. Moreover, the limited-information results could be used as a first step for exploring whether it is worthwhile to expend the extra effort required for full-information inference.

It is important to note that the joint normality assumption \eqref{eqn:mu_asyn} may fail in some cases, rendering our methods invalid. For example:
\begin{enumerate}
\item \emph{We use a combination of aggregate time series moments and micro moments from surveys, but the latter are affected by aggregate macro shocks that shift the whole micro outcome distribution.} In this case, standard cross-sectional moments may not even be \emph{consistent} for the true underlying population moments, since the aggregate shocks do not get averaged out \citep[Section 3]{Hahn2020Estimation}. Moreover, the usual micro standard errors will not take into account the \emph{combined} uncertainty in the macro shocks and idiosyncratic micro noise. However, if the effects of macro shocks on micro moments do average out and we appropriately account for all types of uncertainty when computing moments and their standard errors, then our methods below can be applied. We illustrate this empirically in \cref{sec:appl_hank}.
\item \emph{The data used to compute some of the moments is very heavy tailed, or the estimation procedures are not asymptotically regular.} In this case, even \emph{marginal} normality of the individual empirical moments may fail. We leave extensions to non-normal limit distributions as an interesting topic for future research.
\end{enumerate}

\section{Standard errors and moment selection} \label{sec:procedure}
We first derive the worst-case standard errors for a given choice of moment weight matrix. Then we show that the weighting scheme that minimizes the worst-case standard errors amounts to a moment selection problem with a simple solution.

\subsection{Worst-case standard errors and confidence intervals} \label{sec:se}
We first compute the worst-case bound on the standard error of the moment matching estimator, given knowledge of only the variances of the empirical moments. Although the mathematical argument is straight-forward, it appears that the literature has not realized the practical utility of this result.

Recall that we seek to do inference on the scalar parameter $r(\theta_0)$. By the standard delta method expansion \eqref{eqn:delta_method} under \cref{asn:regularity}, the estimator $r(\hat{\theta})$ is asymptotically equivalent to a certain linear function $x'\hat{\mu}$ of the empirical moments, where $x=(x_1,\dots,x_p)' \equiv WG(G'WG)^{-1}\lambda$. We thus seek to bound the variance of an (asymptotically) known linear combination of $\hat{\mu}$, knowing the variance of each component $\hat{\mu}_j$ but not the correlation structure. This worst-case variance is attained when all components of $\hat{\mu}$ are perfectly positively or negatively correlated (depending on the signs of the elements of $x$), yielding the worst-case variance $(\sum_{j=1}^p |x_j| \var(\hat{\mu}_j)^{1/2})^2$, as is proved in the following elementary lemma.\footnote{The basic intuition is that $\var(X+Y) = \var(X)+\var(Y)+2[\var(X)\var(Y)]^{1/2}\corr(X,Y) \leq \var(X)+\var(Y)+2[\var(X)\var(Y)]^{1/2} = (\var(X)^{1/2}+\var(Y)^{1/2})^2$, since $\corr(X,Y) \leq 1$.}

\begin{lem} \label{thm:var_bound}
	Let $x=(x_1,\dots,x_p)' \in \mathbb{R}^p$ and $\sigma_1^2,\dots,\sigma_p^2 \geq 0$. Let $\mathcal{S}(\sigma)$ denote the set of $p \times p$ symmetric positive semidefinite matrices with diagonal elements $\sigma=(\sigma_1^2,\dots,\sigma_p^2)'$. Then
	\[\max_{V \in \mathcal{S}(\sigma)} \sqrt{x'Vx} = \sum_{j=1}^p \sigma_j |x_j|.\]
\end{lem}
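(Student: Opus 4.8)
The plan is to prove the two inequalities $\leq$ and $\geq$ separately: the upper bound follows from positive semidefiniteness of $V$ together with the Cauchy--Schwarz bound on off-diagonal entries, and the matching lower bound follows by exhibiting an explicit rank-one maximizer. Since I construct the maximizer explicitly, this also establishes that the supremum is attained (so that writing $\max$ is justified).

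For the upper bound, I would first record the elementary fact that any $V\in\mathcal{S}(\sigma)$ satisfies $|V_{ij}|\leq\sigma_i\sigma_j$ for all $i,j$. This holds because the $2\times 2$ principal submatrix with diagonal entries $\sigma_i^2,\sigma_j^2$ and off-diagonal entry $V_{ij}$ is itself positive semidefinite, so its determinant $\sigma_i^2\sigma_j^2-V_{ij}^2$ is nonnegative. Given this, I would bound the quadratic form entrywise,
\[
x'Vx=\sum_{i=1}^p\sum_{j=1}^p x_i x_j V_{ij}\leq\sum_{i=1}^p\sum_{j=1}^p |x_i|\,|x_j|\,\sigma_i\sigma_j=\left(\sum_{j=1}^p\sigma_j|x_j|\right)^{2},
\]
and take square roots. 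As this holds for every feasible $V$, the left-hand side of the claim is at most $\sum_{j=1}^p\sigma_j|x_j|$.

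For the matching lower bound it suffices to exhibit a single $V^*\in\mathcal{S}(\sigma)$ attaining this value. I would take the perfect-correlation matrix aligned with the signs of $x$: let $s_j\in\{-1,+1\}$ equal the sign of $x_j$ (setting $s_j=1$ when $x_j=0$), put $v\equiv(s_1\sigma_1,\dots,s_p\sigma_p)'$, and set $V^*\equiv vv'$. As an outer product, $V^*$ is symmetric positive semidefinite, and its diagonal entries are $s_j^2\sigma_j^2=\sigma_j^2$, so $V^*\in\mathcal{S}(\sigma)$. Then $x'V^*x=(x'v)^2=\bigl(\sum_{j=1}^p s_j x_j\sigma_j\bigr)^2=\bigl(\sum_{j=1}^p\sigma_j|x_j|\bigr)^2$, since $s_j x_j=|x_j|$. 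Combining the two directions yields the claimed equality (with the summation index running to $p$, rather than $k$ as in the displayed statement).

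The argument is entirely elementary, so there is no substantive obstacle; the only points requiring minor care are the degenerate cases $x_j=0$ or $\sigma_j=0$, which are handled cleanly by the convention $s_j\in\{-1,+1\}$ and the rank-one construction, since $s_j^2=1$ ensures the diagonal constraint holds regardless. It is worth emphasizing that the maximizer $V^*$ has rank one, corresponding precisely to the perfectly (positively or negatively) correlated worst case invoked in \cref{sec:se}.
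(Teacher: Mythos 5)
Your proof is correct and takes essentially the same route as the paper's: the paper also bounds the quadratic form entrywise using $|V_{j\ell}| \leq (V_{jj}V_{\ell\ell})^{1/2} = \sigma_j\sigma_\ell$ and attains the bound with the rank-one matrix $V = ss'$, $s = (\sigma_1\sgn(x_1),\dots,\sigma_p\sgn(x_p))'$. Your two small refinements -- deriving the off-diagonal bound from the $2\times 2$ principal minor, and setting $s_j = 1$ when $x_j = 0$ so the diagonal constraint $V_{jj} = \sigma_j^2$ holds even in the degenerate case (where the paper's $\sgn(0)=0$ convention would technically violate feasibility) -- are minor tightenings, not a different argument, and you correctly flag the statement's typo ($k$ should be $p$ in the summation).
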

\begin{proof}
	The right-hand side is attained by $V = ss'$, where $s=(\sigma_1 \sgn(x_1),\dots,\sigma_p \sgn(x_p))'$. Moreover, for any $V \in \mathcal{S}(\sigma)$,
	\[x'Vx = \sum_{j=1}^p\sum_{\ell=1}^p x_jx_\ell V_{j\ell} \leq \sum_{j=1}^p\sum_{\ell=1}^p |x_jx_\ell|\, |V_{j\ell}|\leq \sum_{j=1}^p\sum_{\ell=1}^p |x_jx_\ell|\, \sigma_j\sigma_\ell = \left(\sum_{j=1}^p \sigma_j |x_j|\right)^2,\]
	where the penultimate inequality uses that $|V_{j\ell}|^2 \leq V_{jj}V_{\ell\ell}$ for any symmetric positive semidefinite matrix $V$.
\end{proof}

We can thus construct an estimate of the worst-case standard error of $r(\hat{\theta})$ as
\begin{equation} \label{eqn:wcse}
\widehat{\text{se}}(\hat{x}) \equiv \sum_{j=1}^p \hat{\sigma}_j|\hat{x}_j|,
\end{equation}
where $\hat{x} = (\hat{x}_1,\dots,\hat{x}_p)' \equiv \hat{W}\hat{G}(\hat{G}'\hat{W}\hat{G})^{-1}\hat{\lambda}$, $\hat{G} \equiv \frac{\partial h(\hat{\theta})}{\partial \theta'}$, and $\hat{\lambda} \equiv \frac{\partial r(\hat{\theta})}{\partial \theta}$. In practice, the derivatives may be computed analytically, by automatic differentiation, or by finite differences. Let $\Phi(\cdot)$ denote the standard normal distribution function. Then the confidence interval
\[\widehat{\text{CI}} \equiv \left[r(\hat{\theta}) - \Phi^{-1}(1-\alpha/2)\widehat{\text{se}}(\hat{x}), r(\hat{\theta}) + \Phi^{-1}(1-\alpha/2)\widehat{\text{se}}(\hat{x}) \right]\]
covers $r(\theta_0)$ with probability at least $1-\alpha$ asymptotically, as shown formally in the following proposition. The asymptotic coverage probability generally exceeds $1-\alpha$ due to the worst-case perspective, but coverage is exact in a particular special case when all elements of the empirical moment vector $\hat{\mu}$ are perfectly correlated asymptotically.
\begin{prop} \label{thm:coverage}
Impose \cref{asn:regularity} and $\max_j |x_j|V_{jj}>0$. Then $\lim_{n \to \infty}P(r(\theta_0) \in \widehat{\text{CI}}) \geq 1-\alpha$, and the inequality binds for the rank-1 matrix $V$ defined in the proof of \cref{thm:var_bound}.
\end{prop}
\begin{proof}
Please see \cref{sec:appendix_proofs}.
\end{proof}

\paragraph{Remarks.}
\begin{enumerate}
\item The worst-case standard errors are at most $\sqrt{p}$ times larger than the standard errors that assume all elements of $\hat{\mu}$ to be mutually uncorrelated. This follows from H\"{o}lder's inequality between $\ell_1$ and $\ell_2$ norms, which yields $\sum_{j=1}^p \hat{\sigma}_j|\hat{x}_j| \leq p^{1/2}(\sum_{j=1}^p \hat{\sigma}_j^2\hat{x}_j^2)^{1/2}$.
\item Though often informative, limited-information inference can potentially have much lower power than full-information inference. Specifically, the worst-case standard errors can in some (but not all) models be arbitrarily larger than the standard errors one would report given full knowledge of $V$. For example, consider a ``repeated measurements'' model with $k=1$, $p=2$, $h(\theta)=(\theta,\theta)'$, $r(\theta)=\theta$, and $\hat{W}=I$, yielding the estimator $\hat{\theta}=(\hat{\mu}_1+\hat{\mu}_2)/2$. The worst-case standard error of $\hat{\theta}$ equals $(\hat{\sigma}_1+\hat{\sigma}_2)/2$, corresponding to the case where $\hat{\mu}_1$ and $\hat{\mu}_2$ are perfectly positively correlated. But it is easy to verify that the \emph{best-case} standard error equals $|\hat{\sigma}_1-\hat{\sigma}_2|/2$, corresponding to $\hat{\mu}_1$ and $\hat{\mu}_2$ being perfectly negatively correlated. If $\hat{\sigma}_1=\hat{\sigma}_2$, the ratio of worst-case to best-case standard errors is infinite.
\item In applications where the moment function $h(\theta)$ cannot be evaluated analytically, the moments can instead be approximated by simulating from the economic model given parameter vector $\theta$. \cref{thm:coverage_sim} in \cref{sec:appendix_sim} shows that the conclusions of \cref{thm:coverage} continue to apply in this case, as long as the number of simulation draws is sufficiently large relative to the empirical sample size $n$.
\end{enumerate}

\subsection{Efficient moment selection} \label{sec:weight}
We now derive a weight matrix $W$ that minimizes the \emph{worst-case} variance of the estimator, derived above. We show that this weight matrix puts weight on at most $k$ moments, so the procedure amounts to efficient moment \emph{selection}. Since the weight matrix $W$ only matters in the over-identified case, we assume $p>k$ in this section.

We seek a weight matrix $W$ that minimizes the worst-case asymptotic standard deviation of $r(\hat{\theta})$. Let $x(W)$ denote the vector $x$ defined in \cref{sec:se}, viewed as a function of $W$, and let $\mathcal{S}_p$ denote the set of $p\times p$ symmetric positive semidefinite matrices $W$ such that $G'WG$ is nonsingular. Then we solve the problem
\begin{align}
&\min_{W \in \mathcal{S}_p}\max_{\tilde{V} \in \mathcal{S}(\diag(V))} \sqrt{\lambda' (G'WG)^{-1}G'W\tilde{V}WG(G'WG)^{-1}\lambda} \label{eqn:opt_weight_problem} \\
&\qquad = \min_{W \in \mathcal{S}_p} \max_{\tilde{V} \in \mathcal{S}(\diag(V))}  (x(W)'\tilde{V}x(W))^{1/2} \nonumber \\
&\qquad = \min_{W \in \mathcal{S}_p} \;\sum_{j=1}^p V_{jj}^{1/2}|x_j(W)|, \nonumber
\end{align}
where the last equality uses \cref{thm:var_bound} in \cref{sec:se}. \cref{thm:onestep} in \cref{sec:appendix_proofs} shows that the solution to the final optimization problem above is given by
\begin{equation} \label{eqn:lad}
\min_{W \in \mathcal{S}_p} \;\sum_{j=1}^p V_{jj}^{1/2}|x_j(W)| = \min_{z \in \mathbb{R}^{p-k}}\; \sum_{j=1}^p |\tilde{Y}_j - \tilde{X}_j'z|,
\end{equation}
where we define
\[\tilde{Y}_j \equiv V_{jj}^{1/2}G_{j\bullet}(G'G)^{-1}\lambda \in \mathbb{R},\quad \tilde{X}_j \equiv -V_{jj}^{1/2}G_{j\bullet}^{\perp \prime} \in \mathbb{R}^{p-k},\]
$G^\perp$ is any $p \times (p-k)$ matrix with full column rank satisfying $G'G^\perp = 0_{k \times (p-k)}$, and the notation $A_{j\bullet}$ means the $j$-th row of matrix $A$. The intuition for the equality \eqref{eqn:lad} is that the set of all minimum distance estimators for various weight matrices is asymptotically equivalent to the set of all estimators that are linear combinations of the $p$ moments $\hat{\mu}$, subject to $k$ asymptotic unbiasedness constraints. We can therefore optimize over a $(p-k)$-dimensional linear space.

The final optimization problem \eqref{eqn:lad} is a median regression (Least Absolute Deviation regression) of the artificial ``regressand'' $\lbrace \tilde{Y}_j \rbrace$ on the $p-k$ artificial ``regressors'' $\lbrace \tilde{X}_j \rbrace$. This regression can be executed efficiently using standard quantile regression software.

The solution to the median regression amounts to optimally selecting at most $k$ of the $p$ moments for estimation. Theorem 3.1 of \citet{Koenker1978} implies that there exists a solution $z^*$ to the median regression \eqref{eqn:lad} such that at least $p-k$ out of the $p$ median regression residuals
\[e_j^* \equiv \tilde{Y}_j - \tilde{X}_j'z^*, \quad j=1,\dots,p,\]
equal zero. Hence, an efficient weight matrix $W^*$ that achieves the minimum in \eqref{eqn:lad} will yield a linear combination vector $x(W^*) = (V_{11}^{-1/2}e_1^*,\dots,V_{pp}^{-1/2}e_p^*)'$ that attaches nonzero weight to at most $k$ out of the $p$ empirical moments $\hat{\mu}$. In other words, the solution to the efficient moment \emph{weighting} problem is endogenously achieved by an efficient moment \emph{selection}. We may pick an arbitrary weight matrix that attaches nonzero weight to only the efficiently selected moments (the magnitudes of the weights do not matter asymptotically, as the selected set of moments is just-identified).

\paragraph{Analytical illustration.}
To illustrate the above results, and to link back to the intuitive discussion of portfolio selection in \cref{sec:intro}, consider a ``repeated measurements'' model with $k=1$, $p=2$, $h(\theta)=(\theta,\theta)'$, and $r(\theta)=\theta$. It is easy to see that any weight matrix $\hat{W}$ yields a minimum distance estimator of the form $\hat{\theta}=x_1\hat{\mu}_1+x_2\hat{\mu}_2$, where $x_1+x_2=1$. The constraint that $x_1$ and $x_2$ sum to 1 ensures that the estimator is consistent. The choice of optimal weight matrix corresponds to an optimal choice of $x_1$ and $x_2=1-x_1$. One might expect the usual diversification motive to cause us to attach nonzero weights to both empirical moments $\hat{\mu}_1$ and $\hat{\mu}_2$, so that their estimation errors average out partially. However, for the worst-case correlation structure for these moments---perfect correlation---their estimation errors amplify each other, rendering diversification futile. Hence, we minimize the worst-case standard error $|x_1|\hat{\sigma}_1+|1-x_1|\hat{\sigma}_2$ by setting $x_1=1$ when $\hat{\sigma}_1<\hat{\sigma}_2$ and otherwise $x_1=0$, yielding the estimator $\hat{\theta}=\hat{\mu}_{\argmin_j \hat{\sigma}_j}$. That is, the efficient estimator simply discards the moment with the largest standard error. In more general models, moment selection also depends on how informative each moment is about the structural parameters, as measured by the Jacobian matrix $G$.

\paragraph{Algorithm.}
The efficient estimator and standard errors can be computed as follows:
\begin{enumerate}[i)]
\item Compute an initial $\sqrt{n}$-consistent estimator $\hat{\theta}_\text{init}$ using, say, a diagonal weight matrix with $\hat{W}_{jj} = \hat{\sigma}_j^{-2}$.
\item Construct the derivative matrix $\hat{G} \equiv \frac{\partial h(\hat{\theta}_\text{init})}{\partial \theta'}$ and vector $\hat{\lambda} \equiv \frac{\partial r(\hat{\theta}_\text{init})}{\partial \theta}$, either analytically or numerically.
\item Solve the median regression \eqref{eqn:lad}, substituting $\hat{G}$ for $G$, $\hat{\lambda}$ for $\lambda$, and $\hat{\sigma}_j$ for $V_{jj}^{1/2}$.\footnote{Remember to omit an intercept from the regression.} Compute the residuals $\hat{e}_j^*$, $j=1,\dots,p$, from this median regression. (In the non-generic case where multiple solutions to the median regression exist, select one that yields at most $k$ nonzero residuals.)
\item Construct the efficient linear combination $\hat{x}^*=(\hat{x}_1^*,\dots,\hat{x}_p^*)'$ of the $p$ moments, given by $\hat{x}_j^*\equiv\hat{\sigma}_{j}^{-1}\hat{e}_j^*$ for $j=1,\dots,p$. At least $p-k$ of the elements will be zero, corresponding to those moments that are discarded by the efficient moment selection procedure.
\item To compute an efficient estimator of $r(\theta_0)$, either:
\begin{enumerate}[a)]
\item Compute the just-identified efficient minimum distance estimator $\hat{\theta}_\text{eff}$ of $\theta_0$ which uses any weight matrix that attaches zero weight to those (at least) $p-k$ moments which receive zero weight in the vector $\hat{x}^*$. Then estimate $r(\theta_0)$ by $r(\hat{\theta}_\text{eff})$. Or:
\item Compute the ``one-step'' estimator $\hat{r}_\text{eff-1S} \equiv r(\hat{\theta}_\text{init}) + \hat{x}^*{}'(\hat{\mu}-h(\hat{\theta}_\text{init}))$ of $r(\theta_0)$.\footnote{See \citet[Section 3.4]{Newey1994} for a general discussion of one-step estimators.}
\end{enumerate}
\label{itm:algo_estim}
\item The worst-case standard error of the estimator from step (\ref{itm:algo_estim}) is given by the value of the median regression \eqref{eqn:lad} (i.e., the minimized objective function).
\end{enumerate}
Options (a) and (b) in step (\ref{itm:algo_estim}) of the algorithm are asymptotically equivalent. Option (b) is computationally more convenient as it avoids further numerical optimization, but option (a) ensures that $\hat{\theta}_\text{eff}$ always lies in the parameter space $\Theta$.

\paragraph{Remarks.}
\begin{enumerate}

\item Since all operations involved in computing the efficient linear combination $\hat{x}^*$ are continuous, $\hat{x}^*$ converges in probability to the population efficient linear combination $x(W^*)$. The only exception may be where the population median regression \eqref{eqn:lad} does not have a unique minimum, which is a non-generic case. Even in this case, however, the efficient worst-case standard errors will be consistent (when multiplied by $\sqrt{n}$) under \cref{asn:regularity}(\ref{itm:asn_regularity_1})--(\ref{itm:asn_regularity_4}), by a standard application of the maximum theorem.

\item The full-information (infeasible) efficient weight matrix that exploits knowledge of all of $V$ is known to equal $W=V^{-1}$. This weight matrix in general attaches nonzero weight to all moments, unlike the limited-information efficient solution derived above. The worst-case asymptotic standard deviation \eqref{eqn:lad} given limited information is of course larger than the asymptotic standard deviation $(\lambda'(G'V^{-1}G)^{-1}\lambda)^{1/2}$ of the full-information efficient estimator of $r(\theta_0)$.

\item The efficient moment weighting/selection for the limited-information efficient estimators $r(\hat{\theta}_\text{eff})$ and $\hat{r}_\text{eff-1S}$ depends on the function $r(\cdot)$ of interest, unlike in the case of full-information efficient estimation. In practice, we can just re-run the computations for each function $r(\cdot)$ of interest (e.g., for each component of $\theta$).

\item By restricting $W$ to the set $\mathcal{S}_p$ of weight matrices for which $G'WG$ is nonsingular, we ensure that the parameter vector $\theta_0$ is at least \emph{locally} identified from the selected moments \citep[p.\ 2144]{Newey1994}. However, the selected moments may fail to ensure \emph{global} identification of $\theta_0$. In models for which this is a concern, we recommend using the one-step efficient estimator $\hat{r}_\text{eff-1S}$, which inherits its consistency from the initial estimator $\hat{\theta}_\text{init}$ that could be computed using all available moments.

\item It is not restrictive to consider moment matching estimators of the form \eqref{eqn:estimator}. Consider instead any estimator $\hat{\vartheta} \equiv \hat{f}(\hat{\mu})$ of $\theta_0$, where $\hat{f} \colon \mathbb{R}^p \to \mathbb{R}^k$ is a possibly data-dependent function with enough regularity to satisfy the asymptotically linear expansion
\[\hat{\vartheta}-\theta_0 = H(\hat{\mu}-\mu_0) + o_p(n^{-1/2}),\]
for some $k \times p$ matrix $H$. If we restrict attention to asymptotically regular estimators of $\theta_0$ (i.e., estimators that remain asymptotically unbiased under locally drifting parameters), we need $HG = I_k$. Among all estimators $\hat{\vartheta}$ satisfying these requirements, the smallest possible worst-case asymptotic standard deviation of $r(\hat{\vartheta})$ is achieved by the estimator whose asymptotic linearization matrix $H$ solves
\[\min_{H \colon HG=I_k}\;\max_{\tilde{V} \in \mathcal{S}(\diag(V))} (\lambda' H'\tilde{V}H\lambda)^{1/2}.\]
\cref{thm:onestep} in \cref{sec:appendix_proofs} shows that the solution to this problem is precisely the value of the median regression \eqref{eqn:lad}. In other words, the minimum distance estimator $\hat{\theta}$ with (limited-information) efficient weight matrix delivers the smallest possible worst-case standard errors in a large class of estimators.

\end{enumerate}

\subsection{General knowledge of the covariance matrix} \label{sec:ext}
We now extend our results to the general case where any collection of elements of the asymptotic covariance matrix $V$ of the moments $\hat{\mu}$ is known (or consistently estimable), while the remaining elements are unrestricted. For example, if a pair of elements of $\hat{\mu}$ are known to be independent, the corresponding off-diagonal elements of $V$ must equal zero.

Letting $\tilde{\mathcal{S}}$ denote the given constraint set for $V$, we can compute the worst-case asymptotic standard deviation of $r(\hat{\theta})$ as
\begin{equation} \label{eqn:se_general}
	\sqrt{\max_{\tilde{V} \in \tilde{\mathcal{S}}}\, x'\tilde{V}x},
\end{equation}
where $x$ was defined in \cref{sec:se}.\footnote{Similarly, we could compute the \emph{best-case} standard deviation by minimizing this objective function.} In the case of interest to us, $\tilde{\mathcal{S}}$ is defined by equality restrictions on a subset of the elements of $V$, in addition to the restriction that $V$ is symmetric positive semidefinite.\footnote{In fact, the argument also extends to imposing \emph{inequality} constraints on elements of $V$.} Then the maximization problem \eqref{eqn:se_general} is a so-called semidefinite programming problem, a special case of convex programming. In \cref{sec:appendix_ext}, we show that the efficient weight matrix can likewise be computed through a pair of nested convex optimizations. We also show that analytical simplifications obtain in the special case where we know the \emph{block} diagonal of $V$; such a structure may occur if consecutive elements of $\hat{\mu}$ are obtained from the same data set, allowing estimation of covariances among these elements. Our online software suite automatically implement all these computations.

\section{Testing} \label{sec:test}
In this section we develop a joint test of multiple parameter restrictions as well as a test of over-identifying restrictions.

\subsection{Joint testing of parameter restrictions}
\label{sec:test_joint}

We propose a test of the joint null hypothesis $H_0 \colon r(\theta_0)=0_{m \times 1}$ against the two-sided alternative. In this section, the continuously differentiable function $r \colon \Theta \to \mathbb{R}^m$ is allowed to be multi-valued. Tests of a single parameter restriction ($m=1$) can be carried out using the confidence interval described in \cref{sec:se}. For the case $m>1$, we propose the following testing procedure. Let $\alpha \in (0,1)$ denote the significance level.
\begin{enumerate}[i)]
\item Compute the Wald-type test statistic
\[\hat{\mathscr{T}} \equiv r(\hat{\theta})'\hat{S}r(\hat{\theta}),\]
where $\hat{S}$ is a user-specified symmetric positive definite $m \times m$ matrix, to be discussed below.
\item Compute the critical value
\begin{equation} \label{eqn:max_semidef}
\text{cv}_n \equiv \max_{\tilde{V} \in \mathcal{S}(\diag(V))} \frac{1}{n} \tr\left(\tilde{V} WG(G'WG)^{-1}\lambda S\lambda'(G'WG)^{-1}G'W\right) \times \left(\Phi^{-1}(1-\alpha/2) \right)^2.
\end{equation}
In practice, we substitute the estimates $\frac{1}{n} \diag(V) \approx \diag(\hat{\sigma}_1^2,\dots,\hat{\sigma}_p^2)$, $W \approx \hat{W}$, $S \approx \hat{S}$, $G \approx \hat{G} \equiv \frac{\partial h(\hat{\theta})}{\partial \theta'}$, and $\lambda \approx \hat{\lambda} \equiv \frac{\partial r(\hat{\theta})'}{\partial \theta}$.

\item Reject $H_0 \colon r(\theta_0)=0_{m \times 1}$ if $\hat{\mathscr{T}} > \text{cv}_n$.
\end{enumerate}
The maximization \eqref{eqn:max_semidef} is a semidefinite programming problem, which is easy to compute numerically, as discussed in \cref{sec:ext}. \cref{sec:appendix_ext} extends the procedure to settings where additional knowledge about $V$ is available other than the diagonal.

The following proposition shows that, for conventional significance levels $\alpha$, the asymptotic size of this test does not exceed $\alpha$, regardless of the true correlation structure of the moments.
This holds for any valid choice of weight matrix $\hat{W}$, including---but not limited to---the limited-information efficient weight matrix derived in \cref{sec:weight}.

\begin{prop} \label{thm:test}
Impose \cref{asn:regularity}, except that we redefine $\lambda \equiv \partial r(\theta_0)'/\partial \theta$ and  require this matrix to have full column rank $m$. Assume also that $\hat{S} \stackrel{p}{\to} S$, $S$ is symmetric positive definite, $n\,\mathrm{cv}_n>0$, and $\alpha \leq 0.215$. Then, if $r(\theta_0)=0_{m \times 1}$,
\[\limsup_{n \to \infty} P(\hat{\mathscr{T}} > \mathrm{cv}_n) \leq \alpha.\]
\end{prop}
\begin{proof}
Please see \cref{sec:appendix_proofs}.
\end{proof}

\paragraph{Remarks.}
\begin{enumerate}
\item We do not have formal results on how to choose the weight matrix $\hat{S}$ in the test statistic. A pragmatic \emph{ad hoc} choice is to set $\hat{S} = (\lambda'(G'WG)^{-1}G'W\bar{V}WG(G'WG)^{-1}\lambda)^{-1}$ (with consistent estimates plugged in), where $\bar{V}$ is a particular value for the asymptotic variance $V$ of $\hat{\mu}$ at which we wish to direct power. Then $\hat{\mathscr{T}}$ is the usual full-information Wald statistic given $V=\bar{V}$, though the critical value for the test differs from the conventional one to ensure size control regardless of the actual unknown correlation structure. For example, if we want to direct power toward the case with independent moments, we can choose $\bar{V} = \diag(n\hat{\sigma}_1^2,\dots,n\hat{\sigma}_p^2)$.

\item The test procedure in \cref{thm:test} is generally conservative from a minimax perspective, i.e., the size may be strictly smaller than $\alpha$ for all covariance matrices $V$ of the moments. The reason is that the proof of \cref{thm:test} relies on a tail probability bound for quadratic forms of Gaussian random vectors \citep{Szekely2003}. This bound is attained when $V$ has rank 1, but the positive semidefinite maximum \eqref{eqn:max_semidef} need not be attained at a rank-1 matrix, to our knowledge. It is an interesting topic for future research to devise a test that has a formal minimax optimality property given the limited knowledge of $V$.

\item The test procedure is consistent against any fixed alternative with $r(\theta_0) \neq 0_{m \times 1}$ under the conditions of \cref{thm:test}. This follows from the standard argument that $n\hat{\mathscr{T}}$ diverges to infinity with probability 1 in this case, while $\text{cv}_n = O(n^{-1})$ since the largest eigenvalue of any matrix $\tilde{V} \in \mathcal{S}(\diag(V))$ is bounded above by $\sum_{j=1}^p V_{jj}$.

\item An alternative way to test multiple hypotheses ($m>1$) is to separately test each null hypothesis $r_j(\theta_0)=0$, $j=1,\dots,m$, using the univariate confidence interval for $r_j(\theta_0)$ in \cref{sec:se}, but with a Bonferroni correction applied to the significance level.\footnote{We thank a referee for this suggestion. Yet another approach was suggested to us by Bo Honor\'{e}: Reject whenever $\inf_V nr(\hat{\theta})'(\lambda'(G'WG)^{-1}G'WVWG(G'WG)^{-1}\lambda)^{-1}r(\hat{\theta})$ exceeds the $1-\alpha$ quantile of a chi-squared distribution with $m$ degrees of freedom. That is, we search over $V$ for the smallest conventional Wald test statistic. Unfortunately, this optimization problem appears to be numerically challenging unless $p$ is small.} That is, the test rejects the joint hypothesis when any of the $m$ level-$(1-\alpha/m)$ confidence intervals excludes 0. This Bonferroni test neither dominates, nor is it dominated by, the test in \cref{thm:test} in terms of local power in general. The latter test has the advantage that it is finite-sample invariant to nonsingular linear transformations of the joint null hypothesis (if $\hat{S}$ is chosen as in the first remark above).
\end{enumerate}

\subsection{Over-identification testing}
\label{sec:test_overid}
The fit of the calibrated model can be evaluated using over-identification tests when we have more moments $p$ than parameters $k$. In this subsection we allow for potential model misspecification by dropping the assumption in \cref{sec:model} that there exists $\theta_0 \in \mathbb{R}^k$ such that $h(\theta_0)=\mu_0$. Let an arbitrary weight matrix $\hat{W} \stackrel{p}{\to} W$ be given, such as the limited-information efficient weight matrix derived in \cref{sec:weight}. Define the pseudo-true parameter $\tilde{\theta}_0 \equiv \argmin_{\theta \in \mathbb{R}^k} (\mu_0-h(\theta))'W (\mu_0-h(\theta))$, assuming the minimizer is unique. We continue to impose all the assumptions in \cref{sec:model}, with $\tilde{\theta}_0$ substituting for $\theta_0$.

Suppose we want to know whether the model provides a good fit for a particular moment. Let $j^* \in \lbrace 1,\dots,p\rbrace$ be the index of the moment of interest. We seek a confidence interval for the model misspecification measure $\mu_{0,j^*}-h_{j^*}(\tilde{\theta}_0)$, i.e., the $j^*$-th element of $\mu_0-h(\tilde{\theta}_0)$. It is standard to show that, under \cref{asn:regularity},
\begin{equation} \label{eqn:overid_expansion}
\hat{\mu}-h(\hat{\theta})-(\mu_0-h(\tilde{\theta}_0)) = (I_p - G(G'WG)^{-1}G'W)(\hat{\mu}-\mu_0) + o_p(n^{-1/2}).
\end{equation}
Let $\bar{x}$ be the $j^*$-th column of the matrix $I_p - \hat{W}\hat{G}(\hat{G}'\hat{W}\hat{G})^{-1}\hat{G}'$. Then
\[\left[\hat{\mu}_{j^*}-h_{j^*}(\hat{\theta}) - \Phi^{-1}(1-\alpha/2)\widehat{\text{se}}(\bar{x}), \hat{\mu}_{j^*}-h_{j^*}(\hat{\theta}) + \Phi^{-1}(1-\alpha/2)\widehat{\text{se}}(\bar{x}) \right]\]
is a confidence interval for the difference $\mu_{0,j^*}-h_{j^*}(\tilde{\theta}_0)$, with worst-case asymptotic coverage probability $1-\alpha$. Note that it can happen that $\widehat{\text{se}}(\bar{x})=0$, in which case it is not possible to test the over-identifying restriction corresponding to the $j^*$-th moment.\footnote{Similarly, following the ideas in \cref{sec:test_joint}, a \emph{joint} test of the over-identifying restrictions can be constructed by computing the test statistic $\hat{\mathscr{T}}_\text{overid} \equiv (\hat{\mu}-h(\hat{\theta}))'\hat{S}(\hat{\mu}-h(\hat{\theta}))$ for some $p \times p$ symmetric positive definite matrix $\hat{S}$. A natural \emph{ad hoc} choice is $\hat{S}=\hat{W}$, in which case the test statistic equals the minimized minimum distance objective function. We reject correct specification of the model at significance level $\alpha \leq 0.215$ if the test statistic exceeds the critical value $\text{cv}_{n,\text{overid}} \equiv \max_{\tilde{V} \in \mathcal{S}(\diag(V))} \frac{1}{n} \tr\left(\tilde{V} (I_p - WG(G'WG)^{-1}G')S(I_p - G(G'WG)^{-1}G'W)\right) \times \left(\Phi^{-1}(1-\alpha/2) \right)^2$, where we plug in sample analogues for all the unknown quantities. \label{fn:test_overid_joint}}

One common use of over-identification testing is to evaluate the estimated model's fit on ``non-targeted moments''. This corresponds to the special case where the weight matrix $\hat{W}$ zeroes out the corresponding rows and columns of the non-targeted moments, so that the point estimate $\hat{\theta}$ ignores these moments.\footnote{Note that in this case $p$ continues to denote the total number of moments (``targeted'' plus ``non-targeted''), and in particular $\hat{G}$ should contain derivatives of both kinds of moments.} An application of this idea is to fix the structural parameters $\theta$ \alert{to extraneous values---perhaps obtained from previous studies---}and then evaluate whether the model's predictions at these parameters match additional moments (this can be viewed as a statistical version of the model validation process advocated by \citealp{Kydland1996}). This is achieved by choosing a moment function of the form $h(\theta) = (\theta',\bar{h}(\theta))'$, with empirical moments $\hat{\mu}=(\hat{\tilde{\theta}}',\hat{\bar{\mu}})'$. Here $\hat{\tilde{\theta}}$ are the \alert{extraneous parameter values}, while $\bar{h}(\theta)$ is the ``non-targeted'' model prediction of the empirical moment $\hat{\bar{\mu}}$.

\section{Empirical applications}
\label{sec:appl}
We illustrate our methods with two empirical examples. First we fit a model of menu cost price setting to scanner data. Then we fit a heterogeneous agent New Keynesian model to impulse responses estimated from a combination of micro and macro data.

\subsection{Menu cost price setting in multi-product firms}
\label{sec:appl_price}
Our first application estimates the \citet{Alvarez2014} model of menu cost price-setting in multiproduct firms. We fit the model to moments of price changes from supermarket scanner data. This is a small-scale application with $k=3$ parameters and $p=4$ moments.

Though we in fact have access to the underlying data set, we emulate a hypothetical situation where the model is matched to moments that were reported in another paper. We can therefore compare full-information inference, which uses the underlying data, with limited-information inference, which uses only the moments and their marginal standard errors. We find that limited-information inference remains informative about the structural parameters. Moreover, a simulation study calibrated to this application confirms the utility of our procedures in finite samples.

\paragraph{Model.}
We give a brief overview of the structural model here and refer the reader to \citet{Alvarez2014} for details. A firm sets prices on $N$ products. The \emph{desired} log prices for the products evolve in continuous time as $N$ independent Brownian motions (without drift); however, the actual prices are fixed until the firm pays a fixed menu cost, at which point it may reset all $N$ prices simultaneously. The firm's profit depends negatively on the squared log deviation between the current and desired price, integrated over time and averaged across the $N$ products. The model has $k=3$ parameters: the number $N$ of products, the volatility of the desired prices, and the scaled menu cost (relative to the curvature of the profit function).\footnote{In the notation of \citet{Alvarez2014}, these parameters are $n$, $\sigma$, and $\sqrt{\psi/B}$, respectively.} \citet{Alvarez2014} derive closed-form expressions for the frequency of price changes and the moments of the size of price changes. These are the moments we will match in the data.

\paragraph{Data.}
We empirically estimate the frequency and moments of price changes in scanner data from the supermarket chain Dominick's.\footnote{The data is provided by the James M. Kilts Center, University of Chicago Booth School of Business.} As described in detail in \cref{sec:appendix_appl_price}, we clean the data following \citet{Alvarez2016}, and in particular we focus on data from a single store. Unlike those authors, we exclusively use data on beer products, which arguably increases the interpretability of the results and makes the sample size more relevant for our subsequent simulation study. The final data set contains weekly prices on 499 beer products (Universal Product Codes, henceforth UPCs), observed for an average of 76 weeks per UPC. The total sample size is $n=\text{37,916}$. When computing standard errors, we treat the price changes as i.i.d.\ across UPCs and time.

The $p=4$ reduced-form moments that we match to the structural model are the average number of price changes per week as well as the empirical first, second, and fourth moments of the absolute log price changes (conditional on a nonzero change).\footnote{Before computing moments, we subtract off the overall average log price change (conditional on a nonzero change), since the \citet{Alvarez2014} model abstracts from inflation.} We estimate the full-information covariance matrix of these moments using the usual nonparametric estimate (which depends on sample moments of price changes up to order 8). When applying our limited-information procedures, we use only the diagonal of this covariance matrix.

\paragraph{Results.}
We consider both just-identified and efficient specifications. We treat the number $N$ of products as a parameter to be estimated, since there may not be a perfect correspondence between a UPC and the structural model's notion of a ``product''. The just-identified specification uses a weight matrix that attaches zero weight to the first moment of absolute price changes (i.e., the average), so that the three parameters are estimated from three moments (this estimator is available in closed form). We can then check whether the model provides a good fit for the ``non-targeted'' moment by carrying out the over-identification test proposed in \cref{sec:test_overid}. The efficient specification exploits all four empirical moments, using either the conventional full-information one-step estimator or our limited-information one-step estimator in \cref{sec:weight}. In addition to the full-information and limited-information procedures, we report results for a procedure that (erroneously) assumes that the four empirical moments are mutually independent.

\begin{table}[t]
\centering
\textsc{Price setting application: Parameter estimates} \\[0.5\baselineskip]
\renewcommand{\arraystretch}{1.2}
\begin{tabular}{|l||ccc|c|ccc|}
\hline
 & \multicolumn{4}{c|}{Just-identified specification} & \multicolumn{3}{c|}{Efficient specification} \\
 \cline{2-8}
 & {\#} prod. & Vol. & Menu cost & Over-ID & {\#} prod. & Vol. & Menu cost \\
 \hline\hline
Full-info & 3.012	& 0.090	& 0.291	& 0.002	& 3.255	& 0.089	& 0.305 \\
& (0.046) &	(0.001) & (0.003) & (0.000) & (0.051) & (0.001) & (0.003) \\
Independ. & 3.012	& 0.090	& 0.291	& 0.002	& 2.829	& 0.090	& 0.280 \\
& (0.167) & (0.001) & (0.010) & (0.001) & (0.091) & (0.000) & (0.006) \\
Worst case & 3.012	& 0.090	& 0.291	& 0.002	& 2.786	& 0.090	& 0.278 \\
& (0.235) & (0.001) & (0.016) & (0.002) & (0.148) & (0.001) & (0.011) \\
 \hline
\end{tabular}
\caption{Estimates for the just-identified specification (uses only three moments for estimation) and the efficient specification (exploits all four moments for estimation). The rows correspond to full-information inference (exploits knowledge of $\hat{V}$), inference under independence (erroneously assumes that $\hat{V}$ is diagonal), and worst-case inference (exploits only diagonal of $\hat{V}$ without assuming off-diagonal elements are zero). Parameters: number of products (``{\#} prod.''), volatility of desired log price (``Vol.''), scaled menu cost (``Menu cost''). Column ``Over-ID'' displays the error in fitting the non-targeted mean absolute price change moment, given the just-identified parameter estimates. Standard errors in parentheses.} \label{tab:appl_price_estim}
\end{table}

\cref{tab:appl_price_estim} shows that the limited-information standard errors are larger than the full-information ones, but they remain highly informative about the values of the structural parameters. In the efficient over-identified specification, the worst-case standard errors are at most 3.7 times larger than the corresponding full-information values. Importantly, all worst-case standard errors are arguably small relative to the economic magnitudes of the parameter estimates. Hence, taking a worst-case perspective still allows for informative inference. In this particular application, the standard errors that assume independence are mostly intermediate between the full-information and limited-information values.

Though limited-information inference is informative about the structural parameters themselves, there is a price to pay for the over-identification test in this application. In particular, \cref{tab:appl_price_estim} shows that the limited-information test does not reject the validity of the non-targeted moment restriction, whereas the full-information test does reject (however, the economic magnitude of the moment violation is small, as the empirical moment equals 0.145 but the error in fitting the moment is only 0.002).\footnote{Consistent with the over-identification test, some of the full-information efficient parameter estimates are significantly different from the corresponding independence-based or limited-information estimates in \cref{tab:appl_price_estim}. If the model were correctly specified, there should be no statistically significant differences between estimates obtained with different weight matrices. However, the discrepancies between the different estimates are arguably small in economic terms.} This illustrates the principle that full-information inference is usually preferable if it is practically feasible. Of course, if we did not have access to the underlying supermarket scanner data, there would be no alternative to the limited-information analysis.

\paragraph{Simulation study.}
In \cref{sec:appendix_appl_price} we show that our inference procedures perform well in simulations. We simulate data from the \citet{Alvarez2014} model conditional on the estimated structural parameters. While our limited-information tests and confidence intervals have approximately correct size/coverage given the empirical sample size $n$ (as do the full-information procedures), the procedures that erroneously assume independence between the reduced-form moments can over-reject/under-cover.

\subsection{Heterogeneous agent New Keynesian model}
\label{sec:appl_hank}
Our second application estimates a heterogeneous agent New Keynesian general equilibrium macro model, following \citet{McKay2016} and \citet{Auclert2020}. The matched moments are impulse response functions of macro time series and cross-sectional micro moments with respect to identified productivity and monetary policy shocks, as estimated by \citet{Chang2021} and \citet{MirandaAgrippino2021}. This is a medium-scale application with $k=7$ parameters and $p=23$ moments.

Though less efficient, impulse response matching estimation is more robust to modeling assumptions than full-information likelihood estimation.\footnote{Likelihood procedures for estimation of heterogeneous agent models have been proposed by \citet{Mongey2017}, \citet{Winberry2018}, \citet{Auclert2020}, and \citet{Liu2023} among others.} This is because---in the first-order approximation we consider---the impulse responses with respect to a monetary shock, say, do not depend on the exogenous processes for the other disturbances (e.g., shocks to the household discount rate).\footnote{Impulse responses to a monetary shock are computed by holding fixed all other exogenous shocks. As a result, the linearized impulse responses depend only on parameters of the monetary disturbance process as well as model parameters that govern the endogenous transmission mechanisms.} Thus, our application only requires us to specify and estimate the exogenous processes for productivity and monetary disturbances. We remain agnostic about the number and nature of other shocks that may be driving the economy.

Likewise, our limited-information approach is simpler and less restrictive than other types of procedures that attempt to exploit more information. The only data inputs into our procedure are the impulse response point estimates and confidence intervals reported by \citet{Chang2021} and \citet{MirandaAgrippino2021}. We do not need access to the underlying data used in those papers, as would be required if one were to estimate the joint covariance matrix of all empirical moments via the bootstrap or GMM calculations. Unlike approaches based on bootstrapping or simulating data, we do not need to repeatedly re-run the impulse response estimation routines, and we do not need to fully model the relationship between the macro and micro data used by \citet{Chang2021} (e.g., by specifying all shocks).

\paragraph{Model.}
We employ the one-asset heterogeneous agent New Keynesian model described in \citet[Appendix B.2]{Auclert2020}. Following \citet{McKay2016}, the model features a continuum of heterogeneous households facing uninsurable idiosyncratic earnings risk. The households choose their work hours and amount of savings in a nominal Treasury bond. Monopolistically competitive firms set prices subject to a quadratic adjustment cost, yielding a New Keynesian Phillips curve. Households receive lump sum distributions of government interest revenue and firm profits. The central bank sets the nominal interest rate according to a Taylor rule that depends on inflation. \alert{A detailed description of the model is provided in the Online Supplement.}

\alert{Our model differs from that of \citet{Auclert2020} in only three respects. First, we set the Taylor rule coefficient on output equal to zero, as this coefficient is estimated to be numerically small by \citet[Table F.III]{Auclert2020}. Second, rather than assuming constant total factor productivity (TFP), we allow for an exogenous AR(2) disturbance to the log growth rate of TFP. Third, in order to allow for more flexible dynamics, we generalize the AR(1) process for the monetary disturbance (the residual in the Taylor rule) assumed by \citeauthor{Auclert2020} to an AR(2) process. \citeauthor{Auclert2020} additionally allow for shocks to government spending and markups; our estimation strategy is robust to the presence of such additional shocks, as noted above, though we do not explicitly estimate the effects of these shocks.}

The model is solved through a first-order linearization, using the numerical procedures developed by \citet{Auclert2020}. That is, the moment function $h(\cdot)$ is computed numerically, and its derivatives are approximated via finite differences; no Monte Carlo simulation is involved.

Following \citet{Auclert2020}, we limit ourselves to estimating structural parameters that do not affect the steady state of the model. This allows us to avoid repeatedly recomputing the steady state, though this would be feasible to do with moderate computational effort. The steady state parameters are fixed at the values assumed by \citet[Table B.2]{Auclert2020}. The $k=7$ estimated parameters are: the Taylor rule coefficient on inflation, the slope of the Phillips curve, the three parameters in the AR(2) process for TFP growth, and the two autoregressive coefficients for the monetary disturbance.\footnote{We do not need to estimate the standard deviation of the monetary shock, since this parameter does not affect the normalized impulse responses that we match (see below).}

\paragraph{Data.}
The empirical moments are obtained from two sets of Structural Vector Autoregression estimates of impulse responses to identified shocks.

Impulse responses with respect to TFP shocks are obtained from \citet[Fig.\ 9 and 11, blue lines]{Chang2021}. We use the responses of TFP itself and of GDP (output in the model), as well as the response of a cross-sectional moment estimated using data from the Current Population Survey (CPS): the fraction of people earning less than 2/3 of per capita GDP.\footnote{The factor 2/3 approximately adjusts for the average labor share, see \citet[Sec.\ 6.1]{Chang2021}.} The sophisticated estimation method of \citet{Chang2021} takes into account statistical uncertainty arising from the limited sample sizes in the CPS. By relying directly on their reported results, our analysis inherits this desirable feature.

Impulse responses with respect to monetary shocks are obtained from \citet[Fig.\ 3]{MirandaAgrippino2021}. We use the responses of industrial production (output in the model), the consumer price index (price level in the model), and the 1-year Treasury rate (annualized nominal interest rate in the model). Since our structural model is quarterly but the \citet{MirandaAgrippino2021} data is monthly, we use the end-of-quarter impulse responses.

We focus on four impulse response horizons: the impact horizon, and the 1-, 2-, and 8-quarter horizons. When matching the model to the data, we take into account that the \citet{Chang2021} responses are with respect to a one-standard-deviation shock, while the \citet{MirandaAgrippino2021} responses are normalized so that the Treasury rate increases by 100 basis points on impact.\footnote{\citet{Chang2021} actually consider a 3-standard-deviation shock, but we divide by 3.} Since both papers report Bayesian posterior quantiles, we appeal to the Bernstein-von Mises theorem and define the point estimates to be the reported posterior medians, while the standard errors are those implied by a normal approximation of the reported credible intervals.\footnote{Thus, if the length of the $1-\alpha$ credible interval for $\theta_j$ is $\hat{L}_j$, we set $\hat{\sigma}_j=\hat{L}_j/(2\Phi^{-1}(1-\alpha/2))$.} In total, we have $p=23$ empirical moments, as we discard the impact response of the bond rate, which is normalized to 1.

\paragraph{Results.}
The top row of \cref{tab:appl_hank_estim} shows the parameter estimates
obtained by using a diagonal weight matrix with
$W_{jj}=1/\hat{\sigma}_j^2$.\footnote{We run gradient-based numerical
optimizations from 100 different starting values and report the overall
optimum.
One vector of starting values equals the posterior mode estimates $\hat{\theta}_i^\text{ABRS}$ from \citet[Table F.III]{Auclert2020} \alert{where possible (including setting the AR2 coefficient for the monetary disturbance to zero as assumed therein), supplemented by parameters implying that TFP growth is white noise with standard deviation
equal to the contemporaneous response of TFP to a TFP
shock reported by \citet{Chang2021}.}
The 99
other starting values are simulated uniformly at random from intervals
of the form $[0,2\hat{\theta}_i^\text{ABRS}]$, or $[-0.99,0.99]$ in the
case of AR1 coefficients.} The Taylor rule coefficient on inflation is
estimated to slightly exceed 1. The slope of the Phillips curve is
positive but statistically insignificant at conventional significance
levels. The TFP growth process is estimated to be close to white noise
(i.e., the \emph{level} of TFP is close to a random walk, as commonly assumed
in the literature), while the monetary disturbance process has a
half-life of about 2 quarters.

\begin{table}[t]
\centering
\textsc{Heterogeneous agent application: Parameter estimates} \\[0.5\baselineskip]
\renewcommand{\arraystretch}{1.2}
\begin{tabular}{|l||c|c|ccc|cc|}
\hline
 & & & \multicolumn{3}{c|}{TFP} & \multicolumn{2}{c|}{Monetary} \\
 \cline{4-8}
Weight matrix & TR & PC & AR1 & AR2 & Std & AR1 & AR2 \\
 \hline\hline
Diagonal & 1.060 & 0.009 & 0.008 & -0.040 & 0.006 & 0.702 & 0.075 \\
& (0.100) & (0.008) & (0.146) & (0.189) & (0.000) & (0.110) & (0.163) \\
\hline
Efficient & 1.040 & 0.015 & 0.007 & -0.017 & 0.006 & 0.697 & 0.014 \\
& (0.074) & (0.006) & (0.137) & (0.152) & (0.000) & (0.088) & (0.127) \\
 \hline
\end{tabular}
\caption{Structural parameter estimates with diagonal weight matrix (top row) and efficient weighting (bottom row). Parameters: Taylor rule coefficient on inflation (``TR''); slope of Phillips curve (``PC''); first and second autoregressive (``AR1'' and ``AR2'') and standard deviation (``Std'') parameters of TFP and monetary disturbance processes. Worst-case standard errors in parentheses.} \label{tab:appl_hank_estim}
\end{table}

\cref{fig:appl_hank_irf} compares the model-implied and empirical impulse responses, at the parameter estimates discussed in the previous paragraph. We see that the model-implied impulse responses of output to a monetary shock are too small in magnitude relative to the data at the 2- and 8-quarter horizons, while the opposite is true for the responses of the price level with respect a monetary shock. Moreover, the model fails to generate nontrivial dynamics in the fraction of low-wage earners in response to a TFP shock. To test whether these disparities are too large to be explained by statistical noise, we conduct the over-identification test proposed in \cref{sec:test_overid}. The vertical error bars in \cref{fig:appl_hank_irf} show the 90\% confidence intervals for the \emph{differences} between model-implied and empirical moments, centered at the empirical moments for visual convenience. Three of the model-implied impulse responses for the fraction of low-wage earners fall outside their respective intervals, though only marginally so. While this points to misspecification of either the structural model or the reduced-form VAR models used to generate the empirical impulse responses, we note that the \emph{joint} test of the validity of all $p=23$ moments does not reject at the 10\% level.\footnote{The test statistic in \cref{fn:test_overid_joint} equals 24.58 with critical value 56.99.}

\begin{figure}[t]
\centering
\textsc{Heterogeneous agent application: Impulse responses} \\[0.5\baselineskip]
\includegraphics[width=0.7\linewidth]{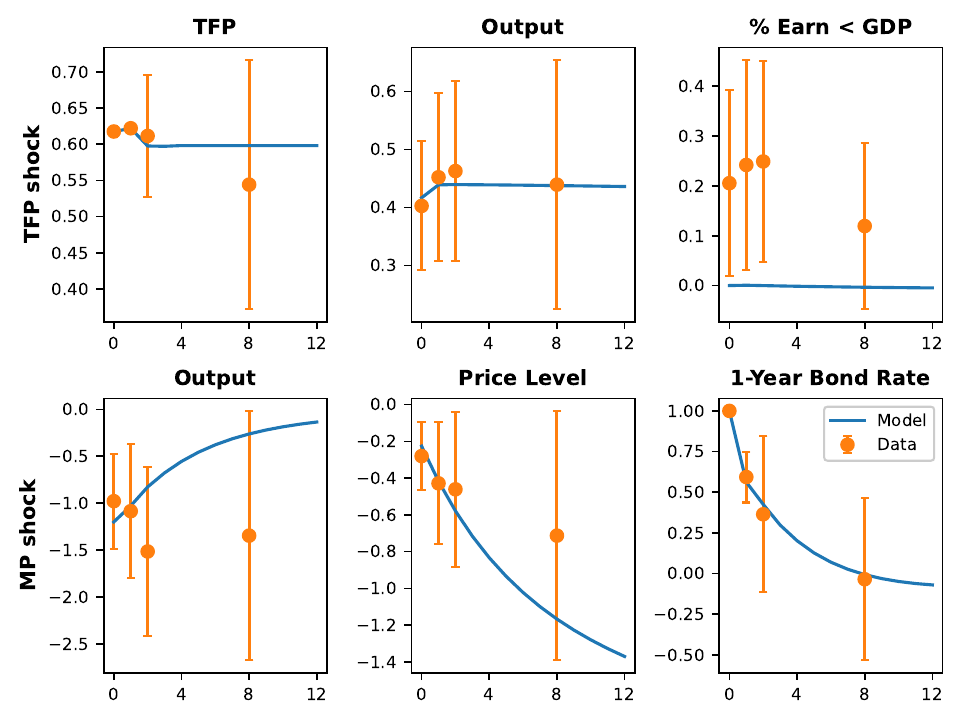}
\caption{Model-implied impulse responses (thin curves) and corresponding empirical estimates (circles), with respect to a one-standard-deviation TFP shock (top row) or a monetary policy shock that raises the bond rate by 1 percentage point on impact (bottom row). Figure titles: response variables (``\% Earn < GDP'': fraction of people earning less than 2/3 of per capita GDP). Vertical axis units: percentage points. Horizontal axis units: quarters. Vertical error bars: (shifted) 90\% confidence intervals for the \emph{differences} between the empirical and model-implied moments (not confidence intervals for the empirical moments themselves).}
\label{fig:appl_hank_irf}
\end{figure}

The efficient (one-step) parameter estimates in the bottom row of \cref{tab:appl_hank_estim} demonstrate the benefit of optimally weighting the moments as described in \cref{sec:weight}. In particular, the t-statistic for the slope of the New Keynesian Phillips curve increases to 2.32, from 1.10 previously. \alert{Thus, our limited-information approach obtains a statistically significant slope parameter, as in the full-information estimation exercise of \citet{Auclert2020} but without requiring specification of all the shocks driving the economy.} More generally, the efficient standard errors in the bottom row of \cref{tab:appl_hank_estim} are 6--26\% smaller than the non-efficient ones in the top row.\footnote{\cref{tab:appl_hank_load} in \cref{sec:appendix_appl_hank} shows which moments are optimally selected to estimate the various parameters.}

\section{Conclusion}
\label{sec:concl}
We computed simple, sharp, and informative upper bounds on the standard errors of structural parameter estimates when the correlation structure of the matched empirical moments is not fully known. In addition, we proposed an efficient moment weighting procedure in the over-identified case, as well as valid tests of parameter restrictions and over-identifying restrictions. The required inputs are minimal: Other than being able to evaluate the mapping from structural parameters to model-implied moments (at least numerically), we just need the empirical moment estimates and their individual standard errors. Our procedures are computationally tractable even in settings with many moments and/or parameters. A code suite is available online.

We believe our limited-information approach is useful for applied researchers who match their models to moments obtained from several different data sources, estimation methods, or previous papers. Our methods obviate the need to estimate the correlation structure across the various moments, which is sometimes difficult or impossible. Even when the moment correlation structure is in principle estimable, our methods may be helpful, since marginal standard errors for individual moments are typically much easier to obtain from standard econometric software than it is to figure out the joint distribution of all moments, as illustrated in our empirical applications. Moreover, the limited-information procedures can be used to gauge whether it is worthwhile to expend the additional effort required for full-information analysis.

\clearpage
\appendix
\section{Appendix} \label{sec:appendix}
\subsection{Technical lemmas and proofs}
\label{sec:appendix_proofs}
Here we state and prove a technical lemma referred to in \cref{sec:weight}, and we provide the proofs of \cref{thm:coverage,thm:test}.

\begin{lem} \label{thm:onestep}
Assume $p,k \in \mathbb{N}$ and $p>k$. Let $\lambda \in \mathbb{R}^k$, and let $G \in \mathbb{R}^{p \times k}$ have full column rank. Let $G^\perp$ denote any $p \times (p-k)$ matrix with full column rank such that $G'G^\perp = 0_{k \times (p-k)}$. Let $\mathcal{S}_p$ denote the set of $p \times p$ symmetric positive semidefinite matrices $W$ such that $G'WG$ is nonsingular. Then
\[\left\lbrace WG(G'WG)^{-1}\lambda \colon W \in \mathcal{S}_p \right\rbrace = \left\lbrace x \colon x \in \mathbb{R}^p,\; G'x=\lambda \right\rbrace = \left\lbrace G(G'G)^{-1}\lambda + G^\perp z \colon z \in \mathbb{R}^{p-k} \right\rbrace.\]
\end{lem}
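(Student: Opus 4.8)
The plan is to prove the two equalities separately, starting with the right-hand one, which is pure linear algebra. Since $G$ has full column rank, $G(G'G)^{-1}\lambda$ is a particular solution of $G'x=\lambda$, while the solution set of the homogeneous system $G'x=0$ is the null space of $G'$, i.e.\ the orthogonal complement of the column space of $G$. Because $G^\perp$ has full column rank $p-k$ and satisfies $G'G^\perp=0$, its columns form a basis for that $(p-k)$-dimensional null space, so $\{x:G'x=0\}=\{G^\perp z:z\in\mathbb{R}^{p-k}\}$; adding the particular solution yields the middle-equals-right identity.

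For the left-hand equality I would argue by double inclusion. One direction is immediate: for any $W\in\mathcal{S}_p$, premultiplying $WG(G'WG)^{-1}\lambda$ by $G'$ gives $(G'WG)(G'WG)^{-1}\lambda=\lambda$, so the left set is contained in the middle set. The substance lies in the reverse inclusion, which is where I expect the main obstacle: given an arbitrary $x$ with $G'x=\lambda$, I must exhibit a symmetric positive semidefinite $W$ with $G'WG$ nonsingular and $WG(G'WG)^{-1}\lambda=x$.

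My construction proceeds in two steps. First, using the characterization just established, I write $x=G(G'G)^{-1}\lambda+G^\perp z$ for some $z$, and (using $\lambda\neq 0$) define the $p\times k$ matrix $C\equiv G(G'G)^{-1}+G^\perp z\lambda'/(\lambda'\lambda)$; a direct check gives $G'C=I_k$ (since $G'G^\perp=0$) and $C\lambda=x$. Second, I set $W\equiv CC'$, which is symmetric and positive semidefinite by construction. The key simplification is that $G'C=I_k$ forces $C'G=I_k$, so that $G'WG=(G'C)(C'G)=I_k$ is nonsingular, hence $W\in\mathcal{S}_p$, and $WG=C(C'G)=C$. Therefore $WG(G'WG)^{-1}\lambda=C\lambda=x$, which closes the reverse inclusion and the proof. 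The delicate point that the device $W=CC'$ resolves is realizing the target influence vector for a \emph{genuinely} positive semidefinite $W$: the outer-product form guarantees positivity automatically, while the normalization $G'C=I_k$ collapses the otherwise awkward factor $(G'WG)^{-1}$ to the identity.

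I would also flag one subtlety: the construction of $C$ divides by $\lambda'\lambda$ and so requires $\lambda\neq 0$. This is not merely cosmetic, since when $\lambda=0$ the left set degenerates to $\{0\}$ whereas the middle set is the full $(p-k)$-dimensional null space of $G'$, so the two sets genuinely differ. The hypothesis $\lambda\neq 0$ (which holds whenever the lemma is invoked under \cref{asn:regularity}) is therefore essential, and I would state it explicitly in the lemma.
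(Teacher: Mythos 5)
Your proof is correct, and for the substantive inclusion it takes a genuinely different route from the paper's. The easy parts coincide: premultiplying $WG(G'WG)^{-1}\lambda$ by $G'$ gives the left-in-middle inclusion, and the particular-solution-plus-null-space argument for the middle/right identity matches the paper's decomposition $x = Gy + G^\perp z$ with $y=(G'G)^{-1}\lambda$ forced. The difference is in realizing an arbitrary $x$ with $G'x=\lambda$ via some admissible $W$: the paper conjugates a block matrix by $(G,G^\perp)$, setting $W(z) = (G,G^\perp)\bigl(\begin{smallmatrix} I_k & \tilde{\lambda}z' \\ z\tilde{\lambda}' & \delta I_{p-k} \end{smallmatrix}\bigr)(G,G^\perp)'$ with $\tilde{\lambda} = \lambda/(\lambda'(G'G)^{-1}\lambda)$, which requires choosing a free parameter $\delta>0$ large enough to secure positive semidefiniteness and yields $G'W(z)G=(G'G)^2$; your outer-product construction $W=CC'$ with $C = G(G'G)^{-1} + G^\perp z\lambda'/(\lambda'\lambda)$ gets positive semidefiniteness for free, normalizes $G'WG = I_k$ so the factor $(G'WG)^{-1}$ disappears, and is arguably cleaner --- at the cost of producing a rank-$k$ (hence singular) $W$, which is fine since $\mathcal{S}_p$ only demands positive semidefiniteness and nonsingularity of $G'WG$. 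Your flag about $\lambda \neq 0$ is also well taken and applies equally to the paper's own proof: its construction divides by $\lambda'(G'G)^{-1}\lambda$, which vanishes exactly when $\lambda = 0$, and in that degenerate case the left set is $\lbrace 0 \rbrace$ while the middle set is the $(p-k)$-dimensional null space of $G'$, so the hypothesis $\lambda \neq 0$ (supplied by \cref{asn:regularity} wherever the lemma is invoked) should indeed be read into the statement.
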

\begin{proof}
We first show that
\begin{equation} \label{eqn:contain1}
\left\lbrace G(G'G)^{-1}\lambda + G^\perp z \colon z \in \mathbb{R}^{p-k} \right\rbrace \subset \left\lbrace WG(G'WG)^{-1}\lambda \colon W \in \mathcal{S}_p \right\rbrace.
\end{equation}
Pick any $z \in \mathbb{R}^{p-k}$, and define
\[W(z) \equiv (G, G^\perp) \left( \begin{array}{cc}
I_k & \tilde{\lambda}z' \\
z\tilde{\lambda}' & \delta I_{p-k}
\end{array} \right) \left(\begin{array}{c}
G' \\
G^{\perp\prime}
\end{array} \right),\quad \tilde{\lambda} \equiv \frac{1}{\lambda'(G'G)^{-1}\lambda}\lambda,\]
where $\delta>0$ is arbitrary but chosen large enough so that $W(z)$ is positive semidefinite. Then
\[W(z)G = (G, G^\perp) \left( \begin{array}{cc}
I_k & \tilde{\lambda}z' \\
z\tilde{\lambda}' & \delta I_{p-k}
\end{array} \right)\left(\begin{array}{c}
G'G \\
0_{(p-k) \times k}
\end{array} \right) = (G + G^\perp z\tilde{\lambda}
') G'G.\]
Hence,
\[G'W(z)G = (G'G)^2,\]
implying
\begin{align*}
W(z)G(G'W(z)G)^{-1}\lambda &= ( G + G^\perp z\tilde{\lambda}') (G'G)^{-1}\lambda = G(G'G)^{-1}\lambda + G^\perp z,
\end{align*}
and thus the statement \eqref{eqn:contain1} holds.

Now pick any $W \in \mathcal{S}_p$. Then $x = WG(G'WG)^{-1}\lambda$ satisfies $G'x=\lambda$. This shows that
\begin{equation} \label{eqn:contain2}
\left\lbrace WG(G'WG)^{-1}\lambda \colon W \in \mathcal{S}_p \right\rbrace \subset \left\lbrace x \colon x \in \mathbb{R}^p,\; G'x=\lambda \right\rbrace.
\end{equation}
Finally, choose any $x \in \mathbb{R}^p$ satisfying $G'x=\lambda$. Since the columns of $G$ and $G^\perp$ are (jointly) linearly independent, there exist $y \in \mathbb{R}^k$ and $z \in \mathbb{R}^{p-k}$ such that $x = Gy + G^\perp z$. Note that $\lambda = G'x = G'Gy$, so necessarily $y=(G'G)^{-1}\lambda$. We have thus shown that
\begin{equation} \label{eqn:contain3}
\left\lbrace x \colon x \in \mathbb{R}^p,\; G'x=\lambda \right\rbrace \subset \left\lbrace G(G'G)^{-1}\lambda + G^\perp z \colon z \in \mathbb{R}^{p-k} \right\rbrace.
\end{equation}
The set inclusions \eqref{eqn:contain1}--\eqref{eqn:contain3} together imply the statement of the lemma.
\end{proof}

\paragraph{Proof of \cref{thm:coverage}.}
It is standard to show that \eqref{eqn:delta_method} holds under \cref{asn:regularity} \citep{Newey1994}. Moreover,
\[\sqrt{n}\, \widehat{\text{se}}(\hat{x}) \stackrel{p}{\to} \sum_{j=1}^p V_{jj}^{1/2}|x_j| = \max_{\tilde{V} \in \mathcal{S}(\diag(V))} \sqrt{x'\tilde{V}x},\]
where the equality uses \cref{thm:var_bound} in \cref{sec:se}, $\diag(V)$ is the vector of diagonal elements of $V$, and we recall the notation $x = WG(G'WG)^{-1}\lambda$. Hence, letting $Z$ denote a standard normal random variable, we have
\begin{align*}
\pushQED{\qed}
\lim_{n \to \infty}P(r(\theta_0) \in \widehat{\text{CI}}) &= \lim_{n \to \infty}P\left(\sqrt{n}|r(\hat{\theta})-r(\theta_0)| \leq \Phi^{-1}(1-\alpha/2) \times \sqrt{n}\,\widehat{\text{se}}(\hat{x})\right) \\
&= P\left(\sqrt{x'Vx}\times |Z| \leq \Phi^{-1}(1-\alpha/2) \times \max_{\tilde{V} \in \mathcal{S}(\diag(V))} \sqrt{x'\tilde{V}x}\right) \\
&\geq P\left(|Z| \leq \Phi^{-1}(1-\alpha/2)\right) \\
&= 1-\alpha. \qedhere \popQED
\end{align*}

\paragraph{Proof of \cref{thm:test}.}
Under the null hypothesis,
\[\sqrt{n}r(\hat{\theta}) \stackrel{d}{\to} \lambda'(G'WG)^{-1}G'WV^{1/2}Z,\]
where $V^{1/2}V^{1/2\prime}=V$, and $Z=(Z_1,\dots,Z_p)' \sim N(0_{p \times 1},I_p)$. The asymptotic null distribution of the test statistic $\hat{\mathscr{T}}$ is therefore a Gaussian quadratic form:
\[n\hat{\mathscr{T}} \stackrel{d}{\to} Z'QZ,\quad Q \equiv V^{1/2\prime}WG(G'WG)^{-1}\lambda S\lambda'(G'WG)^{-1}G'WV^{1/2}.\]
\citet{Szekely2003} prove that
\begin{equation} \label{eqn:szekely_bakirov}
P(Z'QZ \leq \tr(Q) \times \tau) \geq P(Z_1^2 \leq \tau)
\end{equation}
for \emph{any} $p \times p$ symmetric positive semidefinite (non-null) matrix $Q$ and any $\tau > 1.5365$. Since $(\Phi^{-1}(1-\alpha/2))^2 > 1.5365$ for $\alpha \leq 0.215$, it follows that, under the null,
\begin{align*}
\pushQED{\qed}
P(\hat{\mathscr{T}} \leq \text{cv}_n) &\geq P\left(n\hat{\mathscr{T}} \leq \tr(Q) \times (\Phi^{-1}(1-\alpha/2))^2\right) \\
&\to P\left(Z'QZ \leq \tr(Q) \times (\Phi^{-1}(1-\alpha/2))^2\right) \\
&\geq P\left(Z_1^2 \leq (\Phi^{-1}(1-\alpha/2))^2 \right) \\
&= 1-\alpha. \qedhere \popQED
\end{align*}

\subsection{Simulated moments}
\label{sec:appendix_sim}
We here argue that \cref{thm:coverage} can be extended to a setting where the model-implied moment function $h(\cdot)$ is not available analytically and we therefore resort to stochastic simulation to approximate it. The main requirement is that the number of simulation draws $M$ is sufficiently large relative to the empirical sample size $n$.

Let $\hat{h}_M(\cdot)$ denote the approximation of $h(\cdot)$ computed from $M$ simulation draws. For example, if we are matching simple moments of the form $h(\theta) = E[\tilde{h}(\theta,\zeta)]$, where $\tilde{h}(\cdot,\cdot)$ is a deterministic function and $\zeta$ is a random vector (e.g., the shocks in the structural model), then we can set $\hat{h}_M(\theta)=\frac{1}{M}\sum_{s=1}^M \tilde{h}(\theta,\zeta_s)$, where $\zeta_1,\dots,\zeta_M$ are draws from the distribution of $\zeta$. The minimum distance estimator $\hat{\theta}$ is defined as in \eqref{eqn:estimator}, but with $\hat{h}_M(\cdot)$ in place of $h(\cdot)$. The worst-case standard errors are computed as in \eqref{eqn:wcse}, except that we leave the estimator $\hat{G}=\hat{G}_M$ of $G=\frac{\partial h(\theta_0)}{\partial \theta'}$ unspecified for now (we return to this choice below).

\begin{prop} \label{thm:coverage_sim}
Impose \cref{asn:regularity} and $\max_j |x_j|V_{jj}>0$. Assume further that as $n \to \infty$, we have $M \to \infty$, $n/M \to 0$, $\hat{G}_M \stackrel{p}{\to} G$, and $\sup_{\theta \colon \|\theta-\theta_0\|\leq \delta_M}\sqrt{M}\|\hat{h}_M(\theta)-h(\theta)\| = O_p(1)$ for any sequence satisfying $\delta_M \to 0$. Then the conclusions of \cref{thm:coverage} apply.
\end{prop}
\begin{proof}
Upon inspection of the proof of \cref{thm:coverage}, it suffices to show that \eqref{eqn:delta_method} holds. We appeal to Theorem 7.2 in \citet{Newey1994}, where, in their notation, $\hat{g}_n(\theta)=\hat{h}_M(\theta)-\hat{\mu}$ and $g_0(\theta)=h(\theta)-h(\theta_0)$. The first three conditions of this theorem hold by \cref{asn:regularity}(\ref{itm:asn_regularity_3})--(\ref{itm:asn_regularity_6}). The fourth condition holds by \cref{asn:regularity}(\ref{itm:asn_regularity_1}) and $\sqrt{n}[\hat{h}_M(\theta_0)-h(\theta_0)] = \sqrt{n/M} \times \sqrt{M}[\hat{h}_M(\theta_0)-h(\theta_0)] = o(1) \times O_p(1)$. Finally, the fifth condition follows from
\begin{align*}
&\sup_{\theta \colon \|\theta-\theta_0\|\leq \delta_M} \sqrt{n}\|\hat{g}_n(\theta)-\hat{g}_n(\theta_0)-g_0(\theta)\| = \sup_{\theta \colon \|\theta-\theta_0\|\leq \delta_M} \sqrt{n}\|\hat{h}_M(\theta)-\hat{h}_M(\theta_0)-[h(\theta)-h(\theta_0)]\| \\
&\leq 2\sqrt{n/M} \times \sup_{\theta \colon \|\theta-\theta_0\|\leq \delta_M} \sqrt{M}\|\hat{h}_M(\theta)-h(\theta)\| = o(1) \times O_p(1). \qedhere
\end{align*}
\end{proof}

\paragraph{Remarks.}
\begin{enumerate}
\item The key assumption is that the number of simulation draws $M$ is large relative to the empirical sample size $n$, in the sense $n/M \to 0$, so that the simulation noise in $\hat{h}_M(\cdot)$ is asymptotically negligible relative to the statistical noise in $\hat{\mu}$. In practice, we recommend that researchers choose $M$ sufficiently large so that the simulation noise in the moment estimates $\hat{h}_M(\hat{\theta})$ is orders of magnitude smaller than the empirical moment standard errors $\hat{\sigma}_j$.
\item The uniform tightness assumption $\sup_{\theta \colon \|\theta-\theta_0\|\leq \delta_M}\sqrt{M}\|\hat{h}_M(\theta)-h(\theta)\| = O_p(1)$ strengthens pointwise $\sqrt{M}$-consistency of $\hat{h}_M(\theta)$ (which usually follows straight-forwardly from a central limit theorem) to uniform $\sqrt{M}$-consistency for all $\theta$ in a neighborhood of $\theta_0$.
\item Under alternative asymptotics where $M$ is proportional to $n$, \alert{the simulation noise in the moments must be taken into account in the worst-case standard error formula}, and the uniform tightness assumption would need to be strengthened to stochastic equicontinuity as in \citet[Theorem 7.2]{Newey1994}. \alert{In practice, our procedure can be applied without further modification if, for each moment, we increase the squared moment standard error $\hat{\sigma}_j^2$ by the variance of the Monte Carlo simulation noise of that moment.}
\item Consider again the special case with $h(\theta) = E[\tilde{h}(\theta,\zeta)]$ and $\hat{h}_M(\theta)=\frac{1}{M}\sum_{s=1}^M \tilde{h}(\theta,\zeta_s)$, where we now assume that the draws $\lbrace \zeta_s\rbrace_{s=1}^M$ are strictly stationary across $s$. Then uniform tightness is implied by $\sum_{\ell=-\infty}^\infty \|\cov(\tilde{h}(\theta,\zeta_s),\tilde{h}(\theta,\zeta_{s-\ell}))\|$ being uniformly bounded for all $\theta$ in a neighborhood of $\theta_0$ (this follows from Chebyshev's inequality). Notice that these conditions allow the simulated moments $\hat{h}_M(\theta)$ to be computed either by averaging across time (in a single simulated economy) or across independent simulated economies.
\item The only assumption required of the estimated Jacobian matrix $\hat{G}_M$ is consistency. \citet[Section 7.3]{Newey1994} and \citet{Hong2015} discuss generic finite-difference derivative estimators based on simulation draws, and they provide assumptions ensuring consistency.
\end{enumerate}

\subsection{Details of extensions}
\label{sec:appendix_ext}
Here we provide further details on the cases discussed in \cref{sec:ext} where additional knowledge about the moment covariance matrix is available.

\paragraph{General constraint set.}
Consider a general constraint set $\tilde{\mathcal{S}}$ defined by linear equality or inequality restrictions, as well as the positive semidefiniteness constraint. In the over-identified case $p>k$, the worst-case efficient weight matrix $W$ can be computed through two nested convex/concave optimization problems:
\begin{equation} \label{eqn:weight_general}
	\min_{W \in \mathcal{S}_p} \max_{\tilde{V} \in \tilde{\mathcal{S}}}\, x(W)'\tilde{V}x(W) = \min_{z \in \mathbb{R}^{p-k}} \max_{\tilde{V} \in \tilde{\mathcal{S}}}\, \lbrace G(G'G)^{-1}\lambda + G^\perp z\rbrace'\tilde{V}\lbrace G(G'G)^{-1}\lambda + G^\perp z\rbrace,
\end{equation}
where $\mathcal{S}_p$, $x(W)$, and $G^\perp$ were defined in \cref{sec:weight}, and the equality follows from \cref{thm:onestep} in \cref{sec:appendix_proofs}. The inner maximization in \eqref{eqn:weight_general} is a concave semidefinite program, as discussed in \cref{sec:ext}. The outer minimization is an unconstrained convex program since the objective function is a pointwise maximum of convex functions in $z$. The nested optimizations in \eqref{eqn:weight_general} may be neither \emph{strictly} convex nor differentiable, but for our purposes it suffices to use convex optimization algorithms that return any arbitrary local minimum $z^*$ (which is necessarily also a global minimum). Once an optimal $z^*$ has been computed, a corresponding optimal weight matrix is given by the matrix $W(z^*)$ defined in the proof of \cref{thm:onestep} in \cref{sec:appendix_proofs}.

To conduct joint hypothesis tests as in \cref{sec:test}, we can simply replace the constraint set in the critical value computation \eqref{eqn:max_semidef} with $\tilde{\mathcal{S}}$.

\paragraph{Special case: Knowledge of the block diagonal.}
Suppose we know the \emph{block} diagonal of $V$, while all other elements are unrestricted. That is, suppose the constraint set $\tilde{\mathcal{S}}$ is given by all symmetric positive semidefinite matrices of the form
\begin{equation} \label{eqn:block_diag}
	V = \left( \begin{array}{ccccc}
		V_{(1)} & ? & ? & \dots & ? \\
		? & V_{(2)} & ? & \dots & ? \\
		\vdots & &  \ddots & & \vdots \\
		\vdots & & & \ddots & \vdots \\
		? & ? & \dots & ? & V_{(J)}
	\end{array} \right),
\end{equation}
where $V_{(j)}$ are known (or consistently estimable) square symmetric matrices (possibly of different dimensions) for $j=1,\dots,J$. Partition the vectors $x$ and $\hat{\mu}$ conformably as $x=(x_{(1)}',\dots,x_{(J)}')'$ and $\hat{\mu} = (\hat{\mu}_{(1)}',\dots,\hat{\mu}_{(J)}')'$. The worst-case asymptotic standard deviation \eqref{eqn:se_general}, for fixed $W$, is then given by
\begin{equation} \label{eqn:se_block}
	\sqrt{\max_{\tilde{V} \in \tilde{\mathcal{S}}}\, x'\tilde{V}x}=\sum_{j=1}^J (x_{(j)}'V_{(j)}x_{(j)})^{1/2}.
\end{equation}
This follows from the same logic as in \cref{thm:var_bound} in \cref{sec:se} once we recognize that the known block diagonal of $V$ implies that the marginal variance of $x_{(j)}'\hat{\mu}_{(j)}$ is known for each $j=1,\dots,J$, but the correlations among these $J$ variables remain unrestricted. Specifically, the maximum \eqref{eqn:se_block} is achieved by $V=\var(\tilde{\mu})$, where the random vector $\tilde{\mu}=(\tilde{\mu}_{(1)}',\dots,\tilde{\mu}_{(J)}')'$ has the following representation. Let $\eta=(\eta_{(1)}',\dots,\eta_{(J)}')'$ have the covariance matrix \eqref{eqn:block_diag}, but with zeros instead of question marks. Let $\bar{\eta}$ be a scalar random variable with variance 1 that is uncorrelated with $\eta$. Then set $\tilde{\mu}_{(j)} \equiv \frac{1}{\sqrt{x_{(j)}'V_{(j)}x_{(j)}}}V_{(j)}x_{(j)}\bar{\eta} + (I - \frac{1}{x_{(j)}'V_{(j)}x_{(j)}} V_{(j)}x_{(j)}x_{(j)}')\eta_{(j)}$, $j=1,\dots,J$.

In the over-identified case, the worst-case efficient weight matrix can be computed by substituting the formula \eqref{eqn:se_block} into the nested optimization \eqref{eqn:weight_general} (with $x=G(G'G)^{-1}\lambda + G^\perp z$).

\subsection{Details of the price setting application and simulation study}
\label{sec:appendix_appl_price}
Here we provide details of the data used for the empirical application in \cref{sec:appl_price}, and we conduct a simulation study calibrated to this application.

\paragraph{Data.}
We use the ``movement'' data set for beer products (file name {\tt wber.csv}) on the Chicago Booth website.\footnote{\url{https://www.chicagobooth.edu/research/kilts/datasets/dominicks}} We follow \citet{Alvarez2016} when cleaning the data. First, we keep only data for store {\#}122. Second, we drop any observations with prices below 20 cents or above 25 dollars (the data was collected between the years 1989 and 1994). Third, we set any absolute price changes below one cent equal to zero. Fourth, we drop the largest 1\% of absolute log price changes.

\begin{table}[t]
\centering
\textsc{Price setting application: Reduced-form moment estimates} \\[0.5\baselineskip]
\renewcommand{\arraystretch}{1.2}
\begin{tabular}{|l||c|c|ccc|}
\hline
Moment & Estimate & Std. error & \multicolumn{3}{c|}{Pairwise correlation} \\
 &  & $\times 1000$ & $E[(\Delta p)^2]$ & $E[(\Delta p)^4]$ & $E[|\Delta p|]$ \\
\hline\hline
Frequency &	0.293	&	2.338	&	0.000	&	0.000	&	0.000	\\
$E[(\Delta p)^2]$ &	0.027	&	0.233	&		&	0.939	&	0.966	\\
$E[(\Delta p)^4]$ &	0.001	&	0.019	&		&		&	0.831	\\
$E[|\Delta p|]$ &	0.145	&	0.754	&	&		&		\\
\hline

\end{tabular}
\caption{Reduced-form moment estimates and their standard errors: average weekly rate of nonzero price changes (Frequency), and moments $E[|\Delta p|^j]$, $j=1,2,4$ of absolute log price changes conditional on a nonzero change. ``Pairwise correlation'' columns: estimated pairwise correlations across the sample moments.} \label{tab:appl_price_moments}
\end{table}

\cref{tab:appl_price_moments} shows the $p=4$ estimated reduced-form moments, their standard errors, and their estimated correlation matrix. The sample kurtosis (fourth moment divided by squared second moment) of log price changes equals 1.80. The zero correlation between the sample frequency of nonzero price changes (a binary outcome) and the sample moments of the price change magnitudes is mechanical. Our limited-information analysis here does not exploit this fact because we want to emulate what an applied researcher might do without thinking hard about the problem. However, the independence could be taken into account using the extensions described in \cref{sec:ext}.

\paragraph{Simulation study.}
We apply the inference methods to data simulated from the \citet{Alvarez2014} model. The simulations treat the just-identified empirical parameter estimates (columns 1--3 in \cref{tab:appl_price_estim}) as the truth, and we use a sample size of $n=\text{37,916}$ as in the real data. The binary price change indicators are drawn i.i.d.\ from a binomial distribution with the model-implied success probability \citep[Proposition 4]{Alvarez2014}. The magnitudes of the price changes are drawn from the model-implied density function \citep[Proposition 6]{Alvarez2014}.\footnote{We simulate from this density by numerically computing the associated quantile function on a fine grid, and then passing random uniform draws through a cubic interpolation of this function.} We use 10,000 Monte Carlo repetitions. The estimation and inference procedures are the same as the ones applied to the actual data (in particular, efficient estimates are computed using the one-step approach).

\begin{table}[p]
\centering
\textsc{Monte Carlo simulation study} \\[0.5\baselineskip]
\renewcommand{\arraystretch}{1.2}
\begin{tabular}{|l|ccc|ccc|}
\hline
 & \multicolumn{3}{c|}{Just-identified specification} & \multicolumn{3}{c|}{Efficient specification} \\
 \hline
 & {\#} prod. & Vol. & Menu cost & {\#} prod. & Vol. & Menu cost \\
\hline\hline
& \multicolumn{6}{c|}{ } \\[-2ex]
& \multicolumn{6}{c|}{\underline{Confidence interval coverage rate}} \\[1ex]
Full-info	&	94.5\%	&	95.0\%	&	94.7\%	&	95.0\%	&	95.2\%	&	95.3\%	\\
Independence	&	100.0\%	&	95.0\%	&	100.0\%	&	100.0\%	&	89.1\%	&	100.0\%	\\
Worst case	&	100.0\%	&	99.4\%	&	100.0\%	&	100.0\%	&	99.4\%	&	100.0\%	\\

& \multicolumn{6}{c|}{ } \\[-2ex]
& \multicolumn{6}{c|}{\underline{Confidence interval average length}} \\[1ex]
Full-info	&	0.179	&	0.002	&	0.010	&	0.162	&	0.002	&	0.009	\\
Independence	&	0.627	&	0.002	&	0.039	&	0.390	&	0.002	&	0.025	\\
Worst case	&	0.878	&	0.003	&	0.059	&	0.571	&	0.003	&	0.041	\\

& \multicolumn{6}{c|}{ } \\[-2ex]
& \multicolumn{6}{c|}{\underline{RMSE relative to true parameter values}} \\[1ex]
Full-info	&	1.53\%	&	0.59\%	&	0.86\%	&	1.37\%	&	0.58\%	&	0.76\%	\\
Independence	&	1.53\%	&	0.59\%	&	0.86\%	&	1.72\%	&	0.59\%	&	0.99\%	\\
Worst case	&	1.53\%	&	0.59\%	&	0.86\%	&	1.79\%	&	0.59\%	&	1.03\%	\\

& \multicolumn{6}{c|}{ } \\[-2ex]
& \multicolumn{6}{c|}{\underline{Rejection rate of over-identification test}} \\[1ex]
Full-info & \multicolumn{3}{c|}{5.01\%} & \multicolumn{3}{c|}{ } \\
Independence &  \multicolumn{3}{c|}{0.00\%} & \multicolumn{3}{c|}{ } \\
Worst case & \multicolumn{3}{c|}{0.00\%} & \multicolumn{3}{c|}{ } \\

& \multicolumn{6}{c|}{ } \\[-2ex]
& \multicolumn{6}{c|}{\underline{Rejection rate of joint test of true parameter values}} \\[1ex]
Full-info & \multicolumn{3}{c|}{4.79\%} & \multicolumn{3}{c|}{ } \\
Independence &  \multicolumn{3}{c|}{7.54\%} & \multicolumn{3}{c|}{ } \\
Worst case & \multicolumn{3}{c|}{2.47\%} & \multicolumn{3}{c|}{ } \\

\hline
\end{tabular}
\caption{Simulation results based on the empirically calibrated \citet{Alvarez2014} model. The just-identified specification uses only three moments for estimation, while the efficient specification exploits all four moments. The rows correspond to full-information inference (exploits knowledge of $\hat{V}$), inference under independence (erroneously assumes that $\hat{V}$ is diagonal), and worst-case inference (exploits only diagonal of $\hat{V}$ without assuming off-diagonal elements are zero). Estimated parameters: number of products ({\#} prod.), volatility of desired log price (Vol.), scaled menu cost (Menu cost). The over-identification test tests the validity of the fourth non-targeted moment. The joint test of the true parameters is a Wald test (Full-info or Independence) or the test proposed in \cref{sec:test_joint} (Worst case). The nominal significance level is 5\%.}
\label{tab:sim}
\end{table}

\cref{tab:sim} shows that the just-identified and efficient limited-information confidence intervals have coverage probabilities very nearly equal to or exceeding the nominal level of 95\% for all three parameters. Though coverage is conservative, the table shows that the average length of the confidence intervals is not more than six times that of the corresponding full-information confidence intervals. This is consistent with the empirical standard errors reported in \cref{sec:appl_price}.

\cref{tab:sim} also illustrates that the worst-case perspective is key to avoiding over-rejection in the face of limited information: Both the ``efficient'' t-test for the price volatility parameter and the joint Wald test of the true parameter values over-reject if we erroneously assume that all the empirical moments are independent of each other.\footnote{In the case of the volatility parameter, the independence-based efficient estimator loads positively on all four moments (i.e., $\hat{x}_j>0$ for all $j$). Because three of these moments are in fact highly positively correlated with each other (see \cref{tab:appl_price_moments}), ignoring covariances leads to an underestimate of the standard error.} In contrast, the limited-information and full-information t-tests and joint tests are correctly sized.\footnote{We only report the joint test for the just-identified specification. This is because the joint test proposed in \cref{sec:test_overid} requires a single choice of weight matrix, whereas the worst-case efficient point estimates of the three parameters correspond to three different choices of moments (selected as in \cref{sec:weight}).} The limited-information tests are conservative (as predicted by theory), though the joint test of parameter restrictions is only mildly conservative in this particular model.

Finally, \cref{tab:sim} shows that the limited-information efficient point estimates have slightly higher root mean squared error (RMSE) than the efficient full-information estimates. It may seem surprising at first blush that the limited-information efficient estimates can have (marginally) higher RMSE than the just-identified estimates. This is because the limited-information efficient estimates are designed to have low variance under the \emph{worst-case} correlation structure (i.e., perfect correlation of the moments), not under the true correlation structure that is unknown to the econometrician.

\subsection{Details of the heterogeneous agent application}
\label{sec:appendix_appl_hank}
We here provide further details on the application in \cref{sec:appl_hank}. \cref{tab:appl_hank_load} shows which impulse response moments are used to efficiently estimate the seven structural parameters, according to the moment selection procedure described in \cref{sec:weight}. The $p=23$ moments are shown along the rows, while the $k=7$ parameters are shown along the columns. A cell with an ``x'' indicates a non-zero efficient loading ($\hat{x}_j^*$ in the notation of \cref{sec:weight}), while empty cells indicate zero loadings.\footnote{We define a loading to be zero if $|\hat{x}_j^*|<10^{-4}$.}

\begin{table}[p]
\centering
\textsc{Heterogeneous agent application: Efficient moment selection} \\[0.5\baselineskip]
\renewcommand{\arraystretch}{1.2}
\begin{tabular}{|ccc||c|c|ccc|cc|}
\hline
\multicolumn{3}{|c||}{Impulse response} & & & \multicolumn{3}{c|}{TFP} & \multicolumn{2}{c|}{Monetary} \\
\cline{1-3} \cline{6-10}
Var. & Shock & Horiz. & TR & PC & AR1 & AR2 & Std & AR1 & AR2 \\
 \hline\hline
TFP & TFP & 0	&	x	&		&	x	&	x	&	x	&	x	&		\\
 &	& 1&	x	&		&	x	&	x	&		&	x	&		\\
 &	& 2&	x	&	x	&		&	x	&		&	x	&		\\
 &	& 8&		&		&		&		&		&		&		\\
 
\hline
Output & TFP	& 0	&	x	&	x	&		&		&		&	x	&		\\
&	& 1	&		&		&		&		&		&		&		\\
&	& 2	&		&		&		&		&		&		&		\\
&	& 8	&		&		&		&		&		&		&		\\
\hline
Frac & TFP 	& 0	&		&		&		&		&		&		&		\\
&	& 1	&		&		&		&		&		&		&		\\
&	& 2	&		&		&		&		&		&		&		\\
&	& 8	&		&		&		&		&		&		&		\\
\hline
Output & MP	& 0	&	x	&	x	&		&		&		&	x	&	x	\\
&	& 1	&		&		&		&		&		&		&		\\
&	& 2	&		&		&		&		&		&		&		\\
&	& 8	&		&		&		&		&		&		&		\\
\hline
Price & MP	& 0	&		&	x	&		&		&		&		&	x	\\
&	& 1	&		&		&		&		&		&		&		\\
&	& 2	&		&		&		&		&		&	x	&		\\
&	& 8	&		&		&		&		&		&		&		\\
\hline
Bond & MP	& 1	&	x	&	x	&		&		&		&	x	&	x	\\
&	& 2	&		&		&		&		&		&		&		\\
&	& 8	&	x	&		&		&		&		&		&	x	\\
\hline
\end{tabular}
\caption{Cells with an ``x'' indicate that the efficient estimate of the given parameter (along columns) attaches a non-zero weight to the given empirical moment (along rows). First three columns show the impulse response variable (``Var.''), shock, and quarterly horizon (``Horiz.''). Variable ``Frac'': fraction of people earning less than 2/3 of GDP. Shock ``MP'': monetary shock. See parameter abbreviations in \cref{tab:appl_hank_estim}.} \label{tab:appl_hank_load}
\end{table}

\clearpage
\phantomsection
\addcontentsline{toc}{section}{References}
\bibliography{calibration_ref}

\end{document}


\title{Online Supplement to \texorpdfstring{\\}{}``Standard Errors for Calibrated Parameters''}
\author{Matthew D. Cocci \and Mikkel Plagborg-M{\o}ller}
\date{June 17, 2024}
\maketitle

\noindent In Section 5.2 of the main paper, we estimate parameters in a one-asset HANK model.
This appendix gives further details on that model and the estimation
exercise we perform.
The model we estimate is nearly
identical to \citet{Auclert2020} as described in their Appendix B.2 and
estimated in their Appendix F.2.\footnote{%
  That model is itself adapted from \cite{McKay2016}.
}
The purpose of this document is to restate the full model of
\citet{Auclert2020} and define the small number of departures from that
baseline that we adopt for our empirical application.

Note that, in order to stay as close to \citet{Auclert2020} as possible, the notation in this online supplement conflicts with some of the notation in Sections 2--4 of our main paper.

\section{Model summary and empirical specification}

This section summarizes the equilibrium conditions of the one-asset HANK
model in Appendix B of \citet{Auclert2020} that we adopt for the application in Section 5.2 of our main paper. We explain in detail the minor changes that we make to the specification of the Taylor rule and exogenous shock processes relative to \citet{Auclert2020}. Finally, we specify the parameters that we estimate. For ease of access, \cref{sec:full} below reviews the full micro foundations for the model, as also explained in \citet{Auclert2020}.

\paragraph{Equilibrium conditions.}
The one-asset HANK model in Appendix B of \citet{Auclert2020} consists
of a unit mass of heterogeneous households that spend, save, and borrow
(up to a limit), a unit mass of monopolistically competitive
intermediate goods firms, a competitive final goods market, a monetary
authority responsible for setting the nominal interest rate, and a
fiscal authority responsible for taxation and government spending.
The equilibrium conditions
for model aggregates are given by
\begin{align}
  F_t(\boldsymbol{X},\boldsymbol{Z})
  &=
  \begin{pmatrix}
    Y_t - Z_t N_t
    \\
    Y_t
    -
    \frac{\mu_t}{\mu_t-1} \frac{1}{2\kappa}
    \log(1+\pi_t)^2
    Y_t
    -
    w_t
    N_t
    -
    d_t
    \\
    r_tB
    +
    G_t
    -
    \tau_t
    \\
    r_t^*
    + \phi\pi_t
    + \phi_y(Y_t-Y_{ss})
    - i_t
    \\
    1 + r_t
    -
    \frac{1+i_{t-1}}{1+\pi_t}
    \\
    \kappa
    \left(
      \frac{w_t}{Z_t}
      -
      \frac{1}{\mu_t}
    \right)
    +
    \frac{1}{1+r_{t+1}}
    \frac{Y_{t+1}}{Y_t}
    \log(1+\pi_{t+1})
    -
    \log(1+\pi_{t})
    \\
    \mathcal{A}_t(\{r_s,w_s,\tau_s,d_s\})
    -
    B
    \\
    \mathcal{N}_t(\{r_s,w_s,\tau_s,d_s\})
    -
    N_t
  \end{pmatrix}
  =
  \begin{pmatrix}
    0
    \\
    0
    \\
    0
    \\
    0
    \\
    0
    \\
    0
    \\
    0
    \\
    0
  \end{pmatrix}
  ,
  \label{eqm}
\end{align}
where
$\mathcal{A}_t(\cdot)$
and
$\mathcal{N}_t(\cdot)$ denote the aggregate savings and labor supply across heterogeneous households given the interest rate,
wage rate, taxes, and dividends.
The equations represent (in order)
production,
dividends distributed,
the government budget constraint,
the Taylor rule,
the Fisher equation,
the Phillips curve,
asset market clearing,
and
labor market clearing.
The aggregate resource constraint is omitted by Walras' law.
For further details on the model economy and equilibrium conditions, see
either Appendix B of \citet{Auclert2020} or \cref{sec:full} of this
document.

In the above equation, $\boldsymbol{X}=(\boldsymbol{X}_t)_{t \in \mathbb{Z}}$ denotes the sequence of endogenous aggregate variables, the period-$t$ values of which are $\boldsymbol{X}_t=(Y_t,N_t,\pi_t,w_t,d_t,r_t,\tau_t,i_t)$, while $\boldsymbol{Z}=(\boldsymbol{Z}_t)_{t \in \mathbb{Z}}$ denotes the sequence of exogenous driving variables with $\boldsymbol{Z}_t=(Z_t,G_t,r_t^*,\mu_t)$.  We specify the dynamics of the latter below.

\paragraph{Steady state.}
The steady state is defined as the (constant) value $\boldsymbol{X}_{ss}$ for $\boldsymbol{X}_t$ implied by the exogenous variables being equal to the constant values $\boldsymbol{Z}_{ss}=(Z,G,r,\mu)$ at all points in time. It is easy to see that this zero-inflation steady state does not depend on the parameters $(\kappa,\phi,\phi_y)$. We follow \citet{Auclert2020} and assume that the values of the remaining model parameters are known and given by the values listed in \cref{tab:params}.\footnote{%
  Values are taken from their Table B.2.
} We solve for the steady state at these parameter values, using exactly the same code as in \citet{Auclert2020}.\footnote{%
  For brevity, we suppress details on the discretization of grids
  and dynamics for idiosyncratic states in the household problem.
}

\begin{table}[t]
\centering
\begin{tabular}{clc}
  \hline
  Parameter & & Value
  \\\hline\hline
  $\beta=(1+r)^{-1}$ & Discount factor & 0.982 \\
  $\varphi$ & Disutility of labor & 0.786 \\
  $\sigma$ & Inverse IES & 2 \\
  $\nu$ & Inverse Frisch & 2 \\
  $\mu$ & Steady-state markup & 1.2 \\
  $B$ & Bond supply & 5.6 \\
  $G$ & Steady state government spending & 0 \\
  $Z$ & Steady state TFP & 1 \\
  $\underline{a}$ & Borrowing limit & 0 \\
  $\rho_e$ & Autocorrelation of efficiency hours & 0.966 \\
  $\sigma_e$ & Efficiency hour shock std.\ dev. & $0.5\sqrt{1-\rho_e^2}$
  \\\hline
\end{tabular}
\caption{%
  Values for non-estimated parameters. Note that some of these parameters only enter into the equilibrium conditions \eqref{eqm} through the aggregate household decision functions $\mathcal{A}_t(\cdot)$ and $\mathcal{N}_t(\cdot)$.
}
\label{tab:params}
\end{table}

\paragraph{Exogenous disturbance processes.}
We now describe the processes we assume for the exogenous variables $\boldsymbol{Z}_t=(Z_t,G_t,r_t^*,\mu_t)$. Our specification generalizes that of \citet{Auclert2020}, as discussed further below. Define log TFP growth $z_t = \log(Z_t/Z_{t-1})$, government spending in deviation from steady state as a fraction of steady state output $g_t=(G_t-G)/Y_{ss}$, and the markup and natural rate in deviation from steady state, $\tilde{\mu}_t = \mu_t-\mu$ and $\tilde{r}_t^* = r_t^*-r$. We assume the following exogenous AR(2) processes:
\begin{align*}
  z_t &= \rho_{z1} z_{t-1} + \rho_{z2} z_{t-2} + \varepsilon^z_t,
  \qquad
  \varepsilon^z_t
  \sim
  (0,\sigma^2_z),
  \\
  g_t &= \rho_{g1} g_{t-1} + \rho_{g2} g_{t-2} + \varepsilon^g_t,
  \qquad
  \varepsilon^g_t
  \sim
  (0,\sigma^2_g),
  \\
  \tilde{r}^*_t &= \rho_{r1} \tilde{r}_{t-1}^* + \rho_{r2} \tilde{r}_{t-2}^* + \varepsilon^{r}_t,
  \qquad
  \varepsilon^{r}_t
  \sim
  (0,\sigma^2_{r}),
  \\
  \tilde{\mu}_t &= \rho_{\mu 1} \tilde{\mu}_{t-1} + \rho_{\mu 2} \tilde{\mu}_{t-2} + \varepsilon^\mu_t,
  \qquad
  \varepsilon^\mu_t
  \sim
  (0,\sigma^2_\mu).
\end{align*}
All the shocks $\varepsilon_t^j$ for $j \in \lbrace z,g,r,\mu\rbrace$ are i.i.d.\ and mutually independent.

\paragraph{Modifications relative to the \citet{Auclert2020} specification.}
Our empirical specification differs from that in \citet[Table F.III]{Auclert2020} in the following ways.

First, the \citet{Auclert2020} model specification obtains if (i) we assume that there are no TFP shocks ($\sigma_z=0$) and (ii) we restrict the exogenous variables to follow AR(1) processes ($\rho_{g2}=\rho_{r2}=\rho_{\mu2}=0$). We drop all these assumptions and thereby strictly generalize their specification of the exogenous driving forces.

Second, while \citet{Auclert2020} estimate the Taylor rule parameter $\phi_y$ on output, we restrict this parameter to equal 0. This restriction is in line with the numerically small value of $\phi_y$ estimated by \citet[Table F.III]{Auclert2020}. The restriction is also imposed in the variant of the one-asset HANK model featured in the online Python toolbox developed by \citeauthor{Auclert2020}\footnote{\url{https://github.com/shade-econ/sequence-jacobian}}

In summary, the minor modifications we make relative to \citet{Auclert2020} strictly generalize their assumptions on the exogenous driving processes, but impose the restriction that the Taylor rule depends only on inflation and not output. These are the only differences in model specification between our application and that of \citet[Table F.III]{Auclert2020}. Recall that these changes do not affect the model's steady state.

\paragraph{Estimated parameters.}
The 7 parameters we estimate are: the Taylor rule coefficient on inflation $\phi$, the slope of the Phillips curve $\kappa$, the parameters of the autoregressive process for TFP growth $(\rho_{z1},\rho_{z2},\sigma_z)$, and the autoregressive coefficients for the monetary shock $(\rho_{r1},\rho_{r2})$. We do not explicitly estimate the parameters in the autoregressive processes for the government spending disturbance $g_t$ and the markup disturbance $\tilde{\mu}_t$, as these parameters do not influence the particular impulse responses that we match (see the definition of the matched impulse responses in our main paper). Note, however, that our estimation procedure allows for any values of these non-estimated parameters, as explained in our main paper.\footnote{In fact, we do not even need to restrict these processes to be autoregressions of order 2.} We similarly do not estimate the standard deviation of the monetary shock $\sigma_r$, since this parameter does not affect the \emph{normalized} impulse responses with respect to monetary shocks that we match (see the definition of the matched impulse responses in our main paper). Again, we note that our estimation procedure allows for any value of $\sigma_r$, though this parameter is not directly estimated.

\citet{Auclert2020} estimate all parameters of the government spending, markup, and monetary disturbance processes, though they restrict these processes to be autoregressions of order 1 rather than order 2. The fact that they estimate these parameters, however, does not mean that they allow for more shocks in their model than we do. Our impulse response matching procedure is robust to the presence of other shocks, including the government spending and markup shocks explicitly defined above.

\section{Detailed model specification}
\label{sec:full}

This section gives a full accounting of the microfoundations of the one-asset HANK model, the equilibrium conditions of which were summarized above.
This section merely reviews the model of Appendix B.2 of \citet{Auclert2020},
whose modeling assumptions we adopt without any changes (recall that the only changes we make to the empirical specification are to the exogenous disturbance processes, as defined earlier).

\paragraph{Households.}
Each heterogeneous household's problem is characterized by the following
Bellman equation:
\begin{align*}
  V_t(e_{it},a_{it-1})
  =
  &
  \max_{c_{it},n_{it},a_{it}}
  \left\{
    \frac{c_{it}^{1-\sigma}}{1-\sigma}
    -
    \varphi
    \frac{%
      n_{it}^{1+\nu}
    }{%
      1+\nu
    }
    +
    \beta
    \mathbb{E}_t
    V_{t+1}
    (e_{it+1},a_{it})
  \right\}
  \\
  &
  c_{it}
  +
  a_{it}
  =
  (1+r_t)a_{it-1}
  +
  w_te_{it}n_{it}
  -
  \tau_t
  \bar{\tau}(e_{it})
  +
  d_t
  \bar{d}(e_{it})
  \\
  &
  a_{it}\geq \underline{a}
\end{align*}
Specifically, each household optimally chooses its real consumption
$c_{it}$, labor supply $n_{it}$, and real asset holdings
$a_{it}$ subject to a budget and borrowing constraint $a_{it}\geq
\underline{a}$.
Income for consumption and savings is generated from three sources.
First, households can supply labor $n_{it}$, for which they earn a wage
rate $w_t$ per efficiency hour $e_{it}$.
Households' idiosyncratic efficiency state $e_{it}$ evolves exogenously 
over time, and the probability of transitioning from state $e$ to state
$e'$ is governed by a joint probability distribution denoted by
$P(e,e')$.
This transition kernel comes from a discretization of an AR(1) process for $\log e_{it} = \rho_e \log e_{i,t-1} + \epsilon_{i,t}$, $\epsilon_{i,t} \sim N(0,\sigma_e^2)$.
Second, households have access to (real) principal and interest
$(1+r_t)a_{it-1}$ given their previous period asset holdings.
Third, households receive dividend income $d_t\overline{d}(e_{it})$ from
owning monopolistically competitive intermediate goods firms, to be
discussed below.
Subtracting from income, households must pay taxes
$\tau_t\overline{\tau}(e_{it})$ in each period.
Note that
$\overline{d}(\;\cdot\;)$
and
$\overline{\tau}(\;\cdot\;)$
are densities integrating to one;
therefore,
$d_t$
and
$\tau_t$
control the overall level.

\paragraph{Firms.}
First, aggregate output is produced by a single representative final
goods firm in a competitive market that aggregates intermediate goods
$y_{jt}$ produced by a unit mass of monopolistically competitive
intermediate goods firms according to
\begin{align*}
  Y_t
  =
  \left(
      \int_0^1
      y_{jt}^{1/{\mu_t}}\;dj
  \right)^{\mu_t}
  ,
\end{align*}
where $\mu_t$ is an exogenous markup process. Let $P_t$ denote the aggregate price level for final goods (more on this
below).
Let $p_{jt}$ denote the price of the $j$th intermediate good.
Taking the price of final goods $P_t$ and demand $Y_t$ as
given,\footnote{%
  These will be pinned down later by market clearing and zero profits.
  At this point, the task is only to determine how $Y_t$ is created from
  intermediate goods.
}
the final good firm determines how to procure specific intermediate
goods by solving the (nominal) profit maximization problem
\begin{align}
  \max_{y_{jt}}
  P_t
  Y_t
  -
  \int_0^1
  p_{jt}y_{jt}
  \;dj
  \qquad
  \text{s.t.}
  \quad
  Y_t
  =
  \left(
      \int_0^1
      y_{jt}^{1/{\mu_t}}\;dj
  \right)^{\mu_t}
  .
  \label{final_profitmax}
\end{align}
Next, each intermediate goods firm $j$ produces $y_{jt}$ according to
production function
\begin{align*}
  F(n_{jt}) = Z_t n_{jt}
  ,
\end{align*}
where $Z_t$ is aggregate TFP and $n_{jt}$ is the
efficiency units of labor hired by intermediate firm $j$.\footnote{%
  This is a distinct definition from a household's $n_{it}$ supply.
  Whereas firms' $n_{jt}$ is a measure of efficiency units hired,
  $n_{it}$ is raw hours, before scaling by efficiency.
  We handle this distinction in market clearing.
}
Therefore, to produce $y_{jt}$ units of intermediate output, the firm
must hire $n_{jt}=y_{jt}/Z_t$ efficiency units of labor.
Each intermediate firm can set their price in period $t$ subject to a
quadratic (real) adjustment cost
\begin{align*}
  \psi_t
  (p_{jt},p_{jt-1})
  &=
  \frac{\mu_t}{\mu_t-1}
  \frac{1}{2\kappa}
  \log
  \left(
    \frac{p_{jt}}{p_{jt-1}}
  \right)^2
  Y_t
  .
\end{align*}
Under perfect foresight, taking as given demand schedules
$y_{jt}(p_{jt})$ for intermediate good $j$,
firm $j$ chooses prices to maximize discounted (real) profits
\begin{align}
  &
  \max_{\{p_{j,t+s}\}_{s=0}^\infty}
  \sum_{s=0}^\infty
  M_{t,t+s}
  \bigg\{
    \frac{p_{j,t+s}}{P_{t+s}}
    y_{j,t+s}(p_{jt})
    -
    w_{t+s}n_{j,t+s}
    \frac{%
      y_{j,t+s}(p_{jt})
    }{%
      Z_{t+s}
    }
    -
    \frac{\mu_{t+s}}{\mu_{t+s}-1}
    \frac{1}{2\kappa}
    \log
    \left(
      \frac{p_{j,t+s}}{p_{j,t+s-1}}
    \right)^2
    Y_{t+s}
  \bigg\},
  \label{objfcn_pricesetting}
\end{align}
where $M_{t,t+s}$ is the real discount factor, which comes from the
household problem, since intermediate goods firms are owned by
households.

\paragraph{Policy.}
The monetary authority sets the nominal interest rate $i_t$ according to
a Taylor rule
\begin{align}
  i_t
  =
  r_t^* + \phi \pi_t
  + \phi_y (Y_t - Y_{ss}),
  \label{taylor}
\end{align}
where $r_t^*$ is the exogenous natural interest rate with steady state value $r = \beta^{-1}-1$,
$1+\pi_t=P_t/P_{t-1}$ is inflation in the nominal price index, and
$Y_{ss}$ is steady state aggregate output.
This rule, together with inflation, determines the real interest rate
$r_t$ via the Fisher equation
\begin{align}
  1 + r_t = \frac{1+i_{t-1}}{1+\pi_t}
  .
  \label{fisher}
\end{align}
Explicitly, the real rate of interest paid out in period $t$ on the
asset purchased in period $t-1$ equals the promised nominal rate at time
$i_{t-1}$ less realized inflation $\pi_t$ between $t-1$ and $t$.

Finally, the fiscal authority has exogenous spending $G_t$ and pays interest at rate
$r_t$ on the outstanding stock of bonds with constant real face value $B$.

\paragraph{Equilibrium.}
Equilibrium is characterized by (1) a sequence of policy functions and
value functions for the household problem
$\{c_t(e,a_{-1}),n_t(e,a_{-1}),a_t(e,a_{-1}),V_t(e,a_{-1})\}_{t=0}^\infty$,
(2)
a sequence of distributions $\{\Gamma_t(e,a)\}_{t=0}^\infty$ over the
idiosyncratic states with support $\mathcal{E}\times \mathcal{A}$,
(3) and aggregate sequences
$\{Y_t,N_t,\pi_t,w_t,d_t,r_t,\tau_t,i_t,\psi_t\}_{t=0}^\infty$
consistent with equilibrium conditions, given the exogenous processes. We now state these equilibrium conditions.

First, bond market clearing requires that aggregate savings by
households equals the stock of outstanding bonds in each period,
\begin{align}
  B
  =
  \int_{\mathcal{E}}
  \int_{\mathcal{A}}
  a_t(e,a_{-1})
  d\Gamma_t(e,a_{-1})
  .
  \label{assets}
\end{align}
The government must also balance its budget in each period,
\begin{align}
  r_t B
  +
  G_t
  =
  \int_{\mathcal{E}}
  \int_{\mathcal{A}}
  \tau_t
  \bar{\tau}(e)
  \,
  d\Gamma_t(e,a)
  =
  {\tau}_t
  ,
  \label{govt}
\end{align}
where $\tau_t$ is aggregate taxes.

On the firm side,
by standard arguments, there are zero profits in the competitive final
goods market and there is a symmetric equilibrium in the intermediate
goods market in which all monopolistically competitive intermediate
goods firms set identical prices, hire an identical amount of labor, and
produce an identical amount of output.
For brevity, we focus only upon the aggregate implications of
equilibrium in the production side of the model and suppress details at
the level of individual firms.
Therefore,
let $P_t$ denote the price level in period $t$,
let $N_t$ denote the aggregate efficiency units of labor hired across
all firms, and let $Y_t$ denote the corresponding amount of the final
good produced.
Equilibrium in the intermediate goods market implies the following
Phillips curve
\begin{align}
  \log
  \left(
    1+\pi_t
  \right)
  &=
  \kappa
  \left[
  \frac{w_t}{Z_t}
  -
  \frac{1}{\mu_t}
  \right]
  +
  \frac{1}{1+r_{t+1}}
  \log
  \left(
    1+\pi_{t+1}
  \right)
  \frac{Y_{t+1}}{Y_t}
  .
  \label{phillips}
\end{align}
Second, real aggregate output of the final good must equal aggregate
production
\begin{align}
  Y_t = Z_tN_t
  .
  \label{production}
\end{align}
Lastly on the firm side, aggregate real dividends remitted to households
equals total real output less labor and price adjustment costs,
\begin{align}
  \int_\mathcal{E}
  \int_\mathcal{A}
  d_t\bar{d}(e)
  \,
  d\Gamma(e,a)
  =
  d_t
  &=
  Y_t
  -
  w_tN_t
  -
  \frac{\mu_t}{\mu_t-1}
  \frac{1}{2\kappa}
  \log
  \left(
    1+\pi_t
  \right)^2
  Y_t
  .
  \label{dividends}
\end{align}
Next,
labor market clearing requires labor demand from firms
equals households' supply,
\begin{align}
  N_t
  :=
  \int_{\mathcal{E}}
  \int_{\mathcal{A}}
  e
  \,
  n_t(e,a_{-1})
  \,
  d\Gamma_t(e,a_{-1})
  .
  \label{labor}
\end{align}
The aggregate resource constraint requires that private consumption
equals total output less government spending and aggregate price
adjustment costs,
\begin{align}
  \int_{\mathcal{E}}
  \int_{\mathcal{A}}
  c_t(e,a_{-1})
  d\Gamma_t(e,a_{-1})
  =
  Y_t
  -
  G_t
  -
  \frac{\mu_t}{\mu_t-1}
  \frac{1}{2\kappa}
  \log
  \left(
    1+\pi_t
  \right)^2
  .
  \label{resource}
\end{align}
This last equation can be derived by aggregating the budget constraint
of the individual heterogeneous households and using
Equations~\eqref{assets}, \eqref{govt}, \eqref{dividends}, and \eqref{labor}
to simplify.

Finally, the distributions must be consistent with the policy functions
of households,
\begin{align*}
  \Gamma_{t+1}
  (e',a)
  =
  \int_{\mathcal{E}}
  \int_{\mathcal{A}}
  \mathbf{1}\{a_t(e,a_{-1})=a\}
  P(e'|e)
  d\Gamma_{t}(e,a_{-1})
  .
\end{align*}

\paragraph{Sequence form.}
Stacking the equilibrium conditions in Equations \eqref{taylor}--\eqref{labor}, we obtain the system of equations \eqref{eqm} described in the previous section. The model is solved using the linearization technique developed by \citet{Auclert2020}.

\clearpage
\phantomsection
\addcontentsline{toc}{section}{References}
\bibliography{calibration_ref}